\DeclareMathOperator{\tr}{tr}
\DeclareMathOperator{\re}{Re}
\DeclareMathOperator{\im}{Im}
\newcommand{\trb}[1]{\tr \big( #1 \big)}
\newcommand{\bxi}{\boldsymbol{\xi}}
\newcommand{\bsigma}{\boldsymbol{\sigma}}
\newcommand{\bmu}{\boldsymbol{\mu}}
\newcommand{\boeta}{\boldsymbol{\eta}}
\newcommand{\CL}{\mathcal{L}}
\newcommand{\swap}{\texttt{SWAP}}
\newcommand{\CLp}{\mathcal{L}^{\perp}}
\newcommand{\CD}{\mathcal{D}}
\newcommand{\CC}{\mathcal{C}}
\newcommand{\CE}{\mathcal{E}}
\newcommand{\CN}{\mathcal{N}}
\newcommand{\C}{\mathbb{C}}
\newcommand{\R}{\mathbb{R}}
\newcommand{\N}{\mathbb{N}}
\newcommand{\Z}{\mathbb{Z}}
\newcommand{\bA}{\mathbf{A}}
\newcommand{\bB}{\mathbf{B}}
\newcommand{\norm}[1]{\lVert {#1} \rVert}
\newcommand{\id}{\mathds{1}}
\newcommand{\ket}[1]{|{#1} \rangle}
\newcommand{\bra}[1]{\langle {#1} |}
\newcommand{\ketbra}[1]{| {#1} \rangle \! \langle {#1}|}
\newcommand{\err}{\texttt{error}}
\newcommand{\Bigmid}{\, \Big\vert \,}
\newcommand{\Biggmid}{\, \Bigg\vert \,}
\newcommand{\dEcrit}{3.4286}
\newcommand{\dLcrit}{4.9193}
\newcommand*\diff{\mathop{}\!\mathrm{d}}
\newtheorem{definition}{Definition}
\newtheorem{lemma}{Lemma}
\newtheorem{assumption}{Assumption}
\newtheorem{theorem}{Theorem}
\title{Continuous\kern0.05em-\kern-0.1em Variable\\ Quantum MacWilliams Identities}
\author[1]{Ansgar G. Burchards \thanks{ansgarburchards@gmail.com}}
\affil[1]{\small{Dahlem Center for Complex Quantum Systems, Physics Department, Freie
Universit{\"a}t Berlin, Arnimallee 14, 14195 Berlin, Germany}}
\date{\vspace{-5ex}}
\begin{document}
\maketitle

\begin{abstract}
We derive bounds on general quantum error correcting codes against the displacement noise channel. The bounds limit the distances attainable by codes and also apply in an approximate setting. Our main result is a quantum analogue of the classical Cohn-Elkies bound on sphere packing densities attainable in Euclidean space. We further derive a quantum version of Levenshtein's sphere packing bound and argue that Gottesman--Kitaev--Preskill (GKP) codes based on the $E_8$ and Leech lattices achieve optimal distances. The main technical tool is a continuous-variable version of the quantum MacWilliams identities, which we introduce. The identities relate a pair of weight distributions which can be obtained for any two trace-class operators. General properties of these weight distributions are discussed, along with several examples.
\end{abstract}

\tableofcontents

\section{Introduction}
\label{sec:introduction}
Weight distributions play an important role in the theory and analysis of error-correcting codes. In the classical linear setting, they convey strictly more information than a code’s size $k$
and distance $d$, instead fully characterizing the distribution of weights of undetectable errors. 
A central result concerning weight distributions are the MacWilliams identities, considered by Van Lint one of the most fundamental results in coding theory~\cite{vanlint_1982_Introduction}.
As originally derived, they relate the weight distribution of a classical linear error-correcting code to that of its dual code~\cite{macwilliams_1963_theorem}, with a later generalization applying also to nonlinear codes~\cite{macwilliams_1972_MacWilliams}.

In the theory of quantum coding, weight distributions for both stabilizer and non-stabilizer codes were introduced by Shor and Laflamme~\cite{shor_1997_Quantum}. They associate to each quantum code two weight distributions, which contain information about the susceptibility of the code to Pauli errors of any weight. As in the classical case, these quantum weight distributions are related by a linear transformation, which constitutes a quantum version of the MacWilliams identities.

Notably, constraints on weight distributions in both the classical and quantum case can be leveraged in order to derive bounds on code parameters. When combined with the MacWilliams identities, such constraints give rise to upper bounds on the sizes of codes. These bounds take the form of a linear program for every set, $n, k,$ and $d$, of code parameters, whose infeasibility implies the nonexistence of a code with the given parameters. Linear programming bounds are among the tightest known general bounds, both for finite parameters as well as asymptotically~\cite{shor_1997_Quantum, delsarte_1972_Bounds, mceliece_1977_New, ashikhmin_1999_Upper}.
 
Linear programming bounds also play an important role in the classical sphere packing problem -- determining the maximum fraction of $n$-dimensional space that can be covered by  non-overlapping unit balls. The sphere packing problem can be regarded as a continuous-variable analogue of the coding problem, and the corresponding linear programming bound was formulated by Cohn and Elkies~\cite{cohn_2003_New}. 
Given an appropriate auxiliary function, $f:\R^{n} \rightarrow \R$, their bound provides an upper limit on the density of sphere packings achievable in $n$-dimensional space. The bound not only reproduces the tightest known upper bound on sphere packing densities in high dimensions~\cite{cohn_2014_Sphere}, but, remarkably, is tight in dimensions $8$ and $24$. This was shown in dimension $8$ by Viazovska through the construction of an appropriate ‘magic’ auxiliary function, whose resulting density upper bound matches the packing density achieved by the $E_8$ lattice~\cite{viazovska_2017_sphere}. Subsequent work resulted in the construction of an analogous function, which establishes the Leech lattice as an optimal sphere packing in dimension $24$~\cite{cohn_2023_sphere}.

In this work, we introduce new weight distributions and the corresponding MacWilliams identities for the continuous-variable quantum setting. Specifically, for each pair of trace class operators, $\hat{O}_1$ and $\hat{O}_2$,  on the Hilbert space containing the quantum state of $N$ linearly constrained degrees of freedom (modes), we construct a pair of weight distributions $\bA, \bB: \R_{\geq 0} \rightarrow \C$. Both weight distributions are related by a certain linear integral transformation, a continuous-variable version of the quantum MacWilliams identities. After establishing some general properties of these distributions, we study in detail the weight distributions of quantum error correcting codes, that is, the case $\Pi = \hat{O}_1 = \hat{O}_2$, with $\Pi$ the projector onto a finite dimensional subspace of Hilbert space.
Quantifying the error correcting capabilities of a code by its distance---the length of phase-space displacements which are guaranteed to be detectable---we show that the distance of an error-correcting code is reflected in its weight distributions. More specifically, we show that $\bA(r) = \bB(r)$ for all $r$ exceeded by the distance. Note that our approach is not obtained from that of Shor-Laflamme in the limit of diverging local qudit dimension, where the severity of an error is measured by the size of its support instead of its length in phase-space.

In order to enhance the generality of our work, we also introduce a notion of approximate quantum error detecting code (QEDC) of quality $\epsilon$. This notion reduces to the definition of a quantum error correcting code (QECC) in the sense of Knill and Laflamme in the ideal case, $\epsilon=0$, and relaxes their error correction conditions whenever $\epsilon >0$. The parameter $\epsilon$ also has a clear operational interpretation: it is the failure rate of a code in a task where errors are to be detected on an entangled state, only part of which is supported within the code space.

We employ the continuous-variable quantum MacWilliams identities in order to derive a quantum version of the Cohn-Elkies sphere packing bound. This quantum version of the bound limits the size $K$ of a QEDC in terms of its distance $d$ and the number of modes $N$ it is supported on. In the special case that $\epsilon = 0$, it then constrains the sizes of QECCs against the displacement noise channel in the sense of Knill and Laflamme. The bound employs an auxiliary radial function, $f\colon\R^{2N} \rightarrow \R$, that is, a function whose value depends only on the norm of its argument.

\begin{restatable}[Quantum Cohn-Elkies bound]{theorem}{QuantumCohnElkies}
\label{thm:quantum_cohn_elkies}
    Let $\hat{f}\colon \R^{2N} \rightarrow \R$ be a bounded, nonzero, non-negative radial function whose bounded Fourier transform satisfies $f(x) \geq 0$ for $x < d$ and $f(x) \leq 0$ for $x \geq d$, then the parameters of any $[[N, K, d, \epsilon]]$-QEDC satisfy the inequality
    \begin{equation}
    K \leq \frac{1}{1-\epsilon}\sup\Bigg\{\frac{f(x)}{\hat{f}(x)} \Biggmid x \in [0, d] \Bigg\} \, .
\end{equation}
\end{restatable}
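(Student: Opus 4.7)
The proof will follow the blueprint of the classical Cohn-Elkies bound for sphere packings, with the continuous-variable quantum MacWilliams identity of this paper taking the role of Poisson summation over a lattice. The plan is to pair the MacWilliams identity (applied to the code projector $\Pi$) with the radial function $\hat f$ and its Fourier transform $f$, and then to exploit the sign conditions on $f$, the non-negativity of $\hat f$, the positivity of $\bA_\Pi$ and $\bB_\Pi$, and the (approximate) distance property of the code to isolate the bound on $K$.

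\textbf{Setup.} Applied to the pair $(\Pi,\Pi)$ and to the radial auxiliary function $\hat f$, the continuous-variable MacWilliams identity established in the previous sections should give an identity of the schematic form
\begin{equation*}
\int_0^\infty f(r)\, \bA_\Pi(r)\, d\mu(r) \;=\; K \int_0^\infty \hat f(r)\, \bB_\Pi(r)\, d\mu(r),
\end{equation*}
for the appropriate radial measure $d\mu$ on $\R^{2N}$, where the factor $K$ reflects the different normalizations of $\bA_\Pi$ and $\bB_\Pi$ forced by the definitions given earlier in the paper.

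\textbf{Ideal case ($\epsilon=0$).} With $\bA_\Pi,\bB_\Pi \geq 0$ (general property of the weight distribution of a projector) and $\hat f \geq 0$ (hypothesis), together with the sign change of $f$ at $r = d$, I would truncate both sides to the interval $[0,d]$. The hypothesis $f \leq 0$ on $[d,\infty)$ yields $\int_0^\infty f\, \bA_\Pi \leq \int_0^d f\, \bA_\Pi$, and $\hat f,\bB_\Pi \geq 0$ yields $\int_0^\infty \hat f\, \bB_\Pi \geq \int_0^d \hat f\, \bB_\Pi$. The distance identity $\bA_\Pi(r) = \bB_\Pi(r)$ for $r \in [0,d)$ then converts the truncated left-hand integral into $\int_0^d f\, \bB_\Pi$. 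Combining these three inputs with the MacWilliams identity produces
\begin{equation*}
K \int_0^d \hat f(r)\, \bB_\Pi(r)\, d\mu(r) \;\leq\; \int_0^d f(r)\, \bB_\Pi(r)\, d\mu(r).
\end{equation*}
Bounding $f(r) \leq \bigl(\sup_{[0,d]} f/\hat f\bigr)\,\hat f(r)$ on the right-hand side (using $f,\hat f \geq 0$ on $[0,d]$) and dividing by the strictly positive quantity $\int_0^d \hat f\, \bB_\Pi\, d\mu$ (which cannot vanish for a nontrivial code) yields $K \leq \sup_{[0,d]} f/\hat f$ in the ideal case.

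\textbf{Approximate case and main obstacle.} For a QEDC of quality $\epsilon$, the exact equality on $[0,d)$ is replaced by an approximate relation, the correct quantitative form of which must be read off from the operational definition of $\epsilon$ given in the introduction; I expect it to take the shape $\bA_\Pi \leq (1-\epsilon)^{-1}\bB_\Pi$, pointwise or in integrated form on $[0,d)$. Propagating this relaxation through the argument above inflates the right-hand side by exactly the factor $(1-\epsilon)^{-1}$, producing the stated bound. The main technical obstacle is precisely this last translation: one must unpack the approximate Knill-Laflamme-type conditions that define $\epsilon$ and verify that they imply a clean quantitative comparison between $\bA_\Pi$ and $\bB_\Pi$ on $[0,d)$, without introducing additional contributions near $r = 0$ or near $r = d$ that would spoil the sharp $(1-\epsilon)^{-1}$ prefactor; the MacWilliams manipulations and sign accounting are otherwise direct adaptations of the classical Cohn-Elkies argument.
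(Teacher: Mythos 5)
Your high-level plan (pair the MacWilliams identity with the auxiliary function, truncate using the sign conditions, insert the code's distance property, divide) is the same as the paper's, but the bookkeeping of where $K$ and $\epsilon$ enter contains a genuine error, and it is exactly at the step you flag as the "main obstacle." The continuous-variable MacWilliams/Parseval identity carries \emph{no} factor of $K$: for the paper's (unnormalized) enumerators one has $\int_0^\infty \hat f\,\bA = \int_0^\infty f\,\bB$ (and likewise $\int_0^\infty f\,\bA = \int_0^\infty \hat f\,\bB$), valid for arbitrary trace-class operators, since it is just Fubini plus the radial Fourier kernel. Correspondingly, the distance relation in this normalization is $\bA(r) = K\,\bB(r)$ for $r<d$ (exact codes), not $\bA=\bB$, and the operational definition of a quality-$\epsilon$ QEDC translates (the paper's Lemma~1) into $\bA(r) \geq (1-\epsilon)K\,\bB(r)$ for all $r<d$ --- equivalently $\bB \leq \bA/(K(1-\epsilon))$ --- which is the \emph{single} inequality that injects both $K$ and $(1-\epsilon)$ into the bound. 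Your two claims (a $K$ in the Parseval identity, no $K$ in the distance relation) cannot both hold under any rescaling of $\bB$; they are mutually inconsistent but happen to compensate, which is why your final arithmetic lands on the right inequality. Moreover, the approximate relation you "expect," $\bA \leq (1-\epsilon)^{-1}\bB$, is missing the $K$ and points in the wrong direction: the universal Cauchy--Schwarz bound already gives $\bA \leq K\bB$ with no $\epsilon$, and what the QEDC property supplies is a \emph{lower} bound on $\bA$ in terms of $\bB$ below the distance.

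The gap is not merely cosmetic: if you repair the identity (remove the spurious $K$) but keep your pairing and your distance relation, the chain collapses to the vacuous statement $1 \leq K\sup\{f/\hat f\}$. The working argument must pair $\hat f$ with $\bA$ and $f$ with $\bB$, as the paper does:
\begin{equation*}
\int_0^d \hat f\,\bA \;\leq\; \int_0^\infty \hat f\,\bA \;=\; \int_0^\infty f\,\bB \;\leq\; \int_0^d f\,\bB \;\leq\; \frac{1}{K(1-\epsilon)}\int_0^d f\,\bA \;\leq\; \frac{\sup_{[0,d]} f/\hat f}{K(1-\epsilon)}\int_0^d \hat f\,\bA\,,
\end{equation*}
where the first inequality uses $\hat f,\bA\geq 0$, the second uses $f\leq 0$ on $[d,\infty)$ and $\bB\geq 0$, the third is the QEDC inequality applied against the nonnegative weight $f$ on $[0,d)$, and the last uses $\bA \geq 0$. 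So the missing ingredient in your proposal is precisely the quantitative lemma $\bA \geq (1-\epsilon)K\bB$ on $[0,d)$ (proved in the paper from the entangled-state detection task via $p_1,p_2$), together with the correct, $K$-free form of the Parseval step; without these your derivation does not go through as written.
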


The bound provides an upper limit on the size of a continuous-variable code for every auxiliary function $f$ satisfying a set of linear constraints. Optimizing the bound over such auxiliary functions is, in general, intractable. To obtain a more concrete bound in terms of code parameters, we instead apply the theorem to an appropriate, non-optimal family of auxiliary functions originally obtained by Cohn and Elkies based on calculus of variations arguments~\cite{cohn_2003_New}. This yields the following quantum version of Levenshtein's sphere packing bound.

\begin{restatable}[Quantum Levenshtein bound]{theorem}{QuantumLevenshtein}
\label{thm:quantum_levenshtein}
    For $0 < d \leq d_{+}$ any $[[N, K, d, \epsilon]]$-QEDC must satisfy the inequality \begin{equation}
        K d^{2N}\leq \frac{1}{1-\epsilon} \frac{j_{N}^{2 N}}{N! 2^N}
    \end{equation}
    where $d_{+}$ is as in Lemma~\ref{lem:levenshtein_supremum}.
\end{restatable}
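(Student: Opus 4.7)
The strategy is to apply the Quantum Cohn-Elkies bound (Theorem~\ref{thm:quantum_cohn_elkies}) to an explicit auxiliary function from the Cohn-Elkies family~\cite{cohn_2003_New} and evaluate the resulting supremum in closed form.

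For each $d>0$ I would take the radial function $f_d\colon \R^{2N}\to\R$ of the form
\[
f_d(x) \;=\; \frac{J_N\!\bigl(j_N |x|/d\bigr)^{2}}{\bigl(1 - |x|^{2}/d^{2}\bigr)\,|x|^{2N}}\, ,
\]
extended by continuity at $|x|=0$ and $|x|=d$, where $j_N$ is the first positive zero of $J_N$. The double zero of $J_N(j_N|x|/d)^{2}$ at $|x|=d$ cancels the simple zero of the denominator and, combined with the sign flip of $1-|x|^{2}/d^{2}$, gives $f_d\geq 0$ for $|x|<d$ and $f_d\leq 0$ for $|x|\geq d$. The Fourier transform $\hat f_d$ is a bounded, nonzero, non-negative radial function, a standard fact from Hankel transform theory for squared Bessel functions which underlies the original Cohn-Elkies construction. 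Hence $f_d$ satisfies the hypotheses of Theorem~\ref{thm:quantum_cohn_elkies} and yields
\[
K \;\leq\; \frac{1}{1-\epsilon}\, \sup_{x\in[0,d]} \frac{f_d(x)}{\hat f_d(x)}\, .
\]

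The next step is to reduce the supremum to the value at $x=0$. This is precisely the assertion of Lemma~\ref{lem:levenshtein_supremum}, which identifies $d_+$ as the threshold up to which $x\mapsto f_d(x)/\hat f_d(x)$ is maximized at the origin on $[0,d]$. Granting this, the remainder is an explicit evaluation: the small-argument expansion $J_N(y)=y^{N}/(2^{N} N!)+O(y^{N+2})$ gives $f_d(0)=j_N^{2N}/(d^{2N}\,4^{N} (N!)^{2})$, while a Fourier dilation argument combined with a standard Hankel integral yields $\hat f_d(0)=1/(2^{N} N!)$, independent of $d$. The ratio is $j_N^{2N}/(N!\,2^{N} d^{2N})$, which, inserted into the bound above, produces the claimed inequality $K d^{2N}\leq (1-\epsilon)^{-1}\, j_N^{2N}/(N!\,2^{N})$.

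The main obstacle is the content of Lemma~\ref{lem:levenshtein_supremum}: showing that $x=0$ indeed attains the supremum of $f_d/\hat f_d$ on $[0,d]$ whenever $d\leq d_+$. This boils down to a delicate monotonicity analysis of a ratio of Bessel-type expressions, with $d_+$ being precisely the breakdown point of that monotonicity. The other ingredients---the sign analysis of $f_d$, positivity of $\hat f_d$, and the origin evaluation---are classical computations and pose no real difficulty.
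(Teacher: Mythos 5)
Your proposal is correct and follows essentially the same route as the paper: your $f_d$ is (up to a positive constant and the dilation by $d$) exactly the Levenshtein function $f(\mathbf{x})=(1-x^2)\hat{g}(\mathbf{x})^2$ of Eq.~\eqref{eq:levenshtein_f_from_ghat}, whose Fourier transform's non-negativity is established via $\hat{f}=c_N\,(g*\chi_{j_N})$, and the paper likewise applies Theorem~\ref{thm:quantum_cohn_elkies} to the scaled function, invokes Lemma~\ref{lem:levenshtein_supremum} to place the supremum at the origin, and evaluates $f(0)/\hat{f}(0)=1/c_N=j_N^{2N}/(N!\,2^N)$. Your closed-form evaluations of $f_d(0)$ and $\hat{f}_d(0)$ agree with this after the rescaling, so the argument matches the paper's proof in substance.
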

Here, $j_{N}$ denotes the first positive zero of the Bessel function of the first kind $J_{N}$. Note that this version of the Levenshtein bound only applies to distances below some threshold $d_{+}$, for which we provide a closed form expression. In particular, the bound implies that, for any fixed encoded logical dimension $K$, the distance can grow no faster than $\mathcal{O}(\sqrt{N})$.

Finally, we apply Theorem~\ref{thm:quantum_cohn_elkies} to the ‘magic’ auxiliary functions $f_8$ and $f_{24}$, which led to the resolution of the $8$- and $24$-dimensional sphere packing problems. Conditional on an additional assumption on the respective function, we conclude that the distances 
 achieved by ideal ($\epsilon = 0$) Gottesman-Kitaev-Preskill (GKP) codes based on the $E_8$ and Leech lattices cannot be exceeded, even by non-lattice constructions. The additional assumptions concern the maxima achieved by quotients $f_{8}(\sqrt{2} x)/\hat{f}_{8}(\alpha x)$ and $f_{24}(2 x)/\hat{f}_{24}(\alpha x)$ on the unit interval for certain values of $\alpha >0$. While we do not provide formal proofs of the assumptions, we verify their validity by numerically evaluating the quotients and providing plots over the unit interval.

\begin{restatable}[Optimality of $E_8$ and Leech GKP codes]{theorem}{QuantumEEightLeechOptimality}
Assumption~\ref{assumption:f8_sup} implies that for $d \leq d_{8}^{\textnormal{(max)}} \approx \dEcrit$ any $[[4, K , d, \epsilon]]$-QEDC must satisfy the inequality 
    \begin{equation}
    \label{eq:e8bound}
        Kd^{8} \leq  \frac{(4\pi)^4}{1-\epsilon}\, .
    \end{equation}
Assumption~\ref{assumption:f24_sup} implies that for $d \leq d_{24}^{\textnormal{(max)}}\approx \dLcrit$ any $[[12, K , d, \epsilon]]$-QEDC must satisfy the inequality 
    \begin{equation}
    \label{eq:e24bound}
        Kd^{24} \leq \frac{(8\pi)^{12}}{1-\epsilon}\, .
    \end{equation}
\end{restatable}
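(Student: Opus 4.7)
The plan is to apply Theorem~\ref{thm:quantum_cohn_elkies} directly with a rescaled Viazovska magic function in each case, and then reduce the resulting radial supremum to the form controlled by the stated assumption.

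For the $E_8$ bound~\eqref{eq:e8bound}, I would take the theorem's sign-changing auxiliary function on $\R^{2N}=\R^{8}$ to be $f(\xi) = f_{8}(\sqrt{2}\,\xi/d)$. The rescaling places the zero at radius $d$, and the sign hypotheses of Theorem~\ref{thm:quantum_cohn_elkies} follow from the corresponding properties of $f_{8}$---nonnegative on the ball of radius $\sqrt{2}$ and nonpositive outside---together with the positivity $\hat{f}_{8}\geq 0$ which, after Fourier rescaling in dimension $8$, gives
\begin{equation*}
\hat{f}(\xi) \;=\; \bigl(\tfrac{d}{\sqrt{2}}\bigr)^{8}\, \hat{f}_{8}\!\bigl(\tfrac{d\,\xi}{\sqrt{2}}\bigr) \;=\; \tfrac{d^{8}}{16}\, \hat{f}_{8}\!\bigl(\tfrac{d\,\xi}{\sqrt{2}}\bigr)\, .
\end{equation*}
Substituting into the Quantum Cohn--Elkies bound and setting $y = |\xi|/d \in [0,1]$, the quotient $f(\xi)/\hat{f}(\xi)$ becomes $(16/d^{8})\cdot f_{8}(\sqrt{2}\,y)/\hat{f}_{8}(\alpha\, y)$ with $\alpha = d^{2}/\sqrt{2}$. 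Assumption~\ref{assumption:f8_sup} is precisely the statement that this supremum does not exceed $16\pi^{4}$ whenever $d \leq d_{8}^{\textnormal{(max)}}$; rearranging gives $Kd^{8}(1-\epsilon) \leq 16 \cdot 16\pi^{4} = (4\pi)^{4}$, which is~\eqref{eq:e8bound}.

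The Leech case is completely parallel on $\R^{24}$ using $f(\xi) = f_{24}(2\,\xi/d)$, since $f_{24}$ has its zero at the Leech minimum distance $2$. Fourier scaling now gives $\hat{f}(\xi) = (d/2)^{24}\,\hat{f}_{24}(d\,\xi/2)$, and the same reduction produces the supremum $\sup_{y\in[0,1]} f_{24}(2y)/\hat{f}_{24}(\alpha y)$ with $\alpha = d^{2}/2$. Assumption~\ref{assumption:f24_sup} bounds it by $(2\pi)^{12}$, and combining prefactors yields $Kd^{24}(1-\epsilon) \leq 2^{24}\cdot(2\pi)^{12} = 2^{36}\pi^{12} = (8\pi)^{12}$, which is~\eqref{eq:e24bound}.

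The substantive difficulty does not lie in any of the above algebra: once Theorem~\ref{thm:quantum_cohn_elkies} is in hand, the entire argument amounts to a Fourier rescaling and tracking powers of two. The genuine obstacle is establishing the two numerical assumptions---that in dimensions $8$ and $24$ the supremum of the relevant quotient over $[0,1]$ does not exceed its value at $y=0$ throughout the admissible $\alpha$-range. This is where the quantum bound genuinely transcends the classical Cohn--Elkies application (which only needs the quotient at $y=0$), and why the paper settles the assumptions via numerical verification rather than by a closed-form argument.
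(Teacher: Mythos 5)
Your proposal is correct and follows essentially the same route as the paper: apply Theorem~\ref{thm:quantum_cohn_elkies} to the rescaled magic functions $f_{8}(\sqrt{2}x/d)$ and $f_{24}(2x/d)$, compute the rescaled Fourier transforms $d^{8}\hat{f}_{8}(dx/\sqrt{2})/16$ and $d^{24}\hat{f}_{24}(dx/2)/2^{24}$, substitute $y=x/d$ to reduce the supremum to the one controlled by Assumptions~\ref{assumption:f8_sup} and~\ref{assumption:f24_sup}, and track the powers of two to obtain $(4\pi)^{4}$ and $(8\pi)^{12}$. The arithmetic and the identification of the admissible $\alpha$-ranges match the paper's proof exactly.
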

Both of these bounds extend over the full distance range achievable by error correcting codes. That is, both upper limits $d_{8}^{\textnormal{(max)}}$ and $d_{24}^{\textnormal{(max)}}$ exceed the critical distances where Eqs.~\eqref{eq:e8bound} and~\eqref{eq:e24bound}, respectively, achieve equality for the smallest nontrivial code size $K = 2$. The upper limits are hence sufficient for the bounds to constrain all error-correcting codes, as well as to provide lower bounds on the code quality $\epsilon$ of QEDCs for distances exceeding those achievable in the ideal case when $\epsilon =0$.

\section{Background}
\label{sec:background}
In this section we fix some notational conventions and review background information on the Fourier transformation as well as continuous-variable quantum systems. We refer the reader to Refs.~\cite{cohn_2016_Packinga, sloane_1998_sphere, conway_1999_Sphere} for additional background on the sphere packing problem and Refs.~\cite{serafini_2017_Quantuma,weedbrook_2012_Gaussian} for more information on continuous-variable quantum systems. Some introductions to continuous-variable quantum error correction can be found in Refs.~\cite{albert_2022_Bosonic, brady_2024_Advances, terhal_2020_Scalable} and a mathematically oriented discussion of lattice based codes has been given in Ref.~\cite{conrad_2022_GottesmanKitaevPreskill}.

\subsection{Fourier transformation}
The unitary Fourier transformation on functions $f:\R^{2N} \rightarrow \C$ is defined as
\begin{equation}
\label{eq:fourier_definition}
    \hat{f}(\bold{y}) = \frac{1}{(2 \pi)^{N}}\int \diff \bold{x}\, e^{-i\bold{y}^{T}\bold{x}} f(\bold{x}) \, .
\end{equation}
Throughout we will be primarily concerned with radial functions, that is those functions $f$ whose values only depend on the magnitude of their argument. We will denote vectors in bold face and occasionally denote their magnitude in normal font, i.e.\ $x = \norm{\bold{x}}$. Hence, a radial function is one that satisfies $f(\bold{x}) = f(x)$ for some $f: \R_{\geq 0} \rightarrow \C$, which we denote by the same symbol as the original function. Note that if $f$ is a radial function then so is $\hat{f}$. As a consequence one can explicitly write the Fourier transform~\eqref{eq:fourier_definition} as a transformation between the respective radial representations. Concretely,
\begin{equation}
\label{eq:radial_fourier_transform_background}
    \hat{f}(y) = \frac{1}{y^{N-1}}\int_{0}^{\infty} \diff r\, J_{N-1}(yr) r^{N}f(r) \, ,
\end{equation}
where $J_{N}$ denotes the Bessel function of first kind. For a proof see e.g.\ Ref.~\cite{andrews_1999_Special}.

\subsection{Continuous-variable quantum systems}
In classical physics the state of $N$ continuous degrees of freedom is given by specifying their respective positions and momenta, a set of parameters conveniently described by a vector $\bold{x} \in \R^{2N}$. In the theory of continuous-variable quantum systems, both position and momentum of each degree of freedom must be upgraded to position operators $\hat{q}_i$ and momentum operators $\hat{p}_i$, which act on a separable Hilbert space. As in the classical case we define the ‘qqpp’-ordered vector of operators
\begin{equation}
    \hat{\bold{x}} = (\hat{q}_1, \dots, \hat{q}_N, \hat{p}_1, \dots , \hat{p}_N)^{T} \, .
\end{equation}
The elementary fact that position and momentum are conjugate variables, both of which can not be known arbitrarily precisely at a given time, is reflected in the nontrivial commutation relations between the respective operators.
In particular, position and momentum satisfy the well-known canonical commutation relations
\begin{equation}
\label{eq:canonical_commutation_relations}
    [\hat{\bold{x}}_i , \hat{\bold{x}}_j] = i \Omega_{ij}
\end{equation}
where 
\begin{equation}
    \Omega = \begin{pmatrix}
        0 & \mathds{1}_N \\
        -\mathds{1}_N & 0 
    \end{pmatrix} 
\end{equation}
denotes the standard symplectic form. Another important set of operators are the so-called displacement operators, whose effect is to modify the positions and momenta of the degrees of freedom by some specified amount. Specifically, for $\bxi \in \R^{2N}$ we have
\begin{equation}
     D_{\bxi} = D(\bxi) \coloneqq e^{-i\bxi^{T} \Omega \hat{\bold{x}}}
\end{equation}
and it follows from Eq.\eqref{eq:canonical_commutation_relations} that
\begin{equation}
    D_{\bxi}^{\dagger} \hat{\bold{x} }D_{\bxi} = \hat{\bold{x}} + \bxi \, .
\end{equation}
The displacement operators satisfy the commutation and multiplication relations
\begin{equation}
     D(\bxi)D(\bmu) =  e^{- \frac{i}{2}\bxi^{T} \Omega \bmu} D(\bxi + \bmu) = e^{-i \bxi^{T} \Omega \bmu} D(\bmu) D(\bxi) 
\end{equation}
and are mutually orthogonal with respect to the Hilbert-Schmidt inner product
\begin{equation}
     \tr \big(D_{\bxi}^{\dagger} D_{\bmu}\big) = (2\pi)^{N} \delta^{(2N)}(\bxi - \bmu) . 
\end{equation}
It is an essential fact that any (trace-class) operator $\hat{O}$ can be expanded in terms of displacements as
\begin{equation}
     \hat{O} = \frac{1}{(2\pi)^{N}} \int \diff\bxi\,  \chi_{\hat{O}}(\bxi) D_{\bxi} \, ,
\end{equation}
where $\chi_{\hat{O}}(\bxi) \coloneqq \tr (D^{\dagger}_{\bxi} \hat{O} )$ denotes the operator's characteristic function.

\section{Continuous-variable weight distributions}
\label{sec:macWilliams}
Let $\hat{O}_{1,2}$ denote two trace-class operators on the Hilbert space of $N$ modes. We define the following two  continuous weight distributions $\bA, \bB: \R_{\geq 0} \rightarrow \C$. First, the primary weight distribution
\begin{equation}
    \bA(r; \hat{O}_1, \hat{O}_2) = \int_{\norm{\bxi} = r} \diff \bxi \, \tr\big( D_{\bxi}^{\dagger} \hat{O}_{1}\big) \tr\big( D_{\bxi}^{\dagger} \hat{O}_{2} \big)^{*} \, ,
\end{equation}
and, second, the dual weight distribution
\begin{equation}
\label{eq:def:B_CV}
    \bB(r; \hat{O}_1, \hat{O}_2) = \int_{\norm{\bxi} = r} \diff \bxi \, \tr \Big( D_{\bxi} \hat{O}_{1} D_{\bxi}^{\dagger} \hat{O}_{2}^{\dagger} \Big)\, .
\end{equation}
Here both integrals are understood to be taken with respect to the surface measure $r^{2N-1} \diff \Omega^{(2N-1)}$. In the case of a single operator, i.e.\ $\hat{O}_{1} = \hat{O}_{2}$, both weight distributions become real-valued. The primary weight distribution $\bA$ then contains information about how much of an operator is supported on displacements of any fixed length, while the dual distribution $\bB$ contains information about the operator's overlap with copies shifted by some fixed length. 

We will from now on suppress the dependence on $\hat{O}_{1,2}$ in the arguments of $\bA$ and $\bB$. 

\subsection{The continuous-variable quantum MacWilliams identity}
Let us show that both distributions contain equivalent information content -- more precisely, they are related by an invertible linear integral transform. 
As both $\hat{O}_{1}$ and $\hat{O}_{2}$ are trace class we can expand them in terms of displacements to obtain the expression
\begin{equation}
    \tr \Big( D_{\bxi} \hat{O}_{1} D_{\bxi}^{\dagger} \hat{O}_{2}^{\dagger} \Big) = \frac{1}{(2\pi)^{2N}}\int \diff\boeta \diff\bmu \, \chi_{\hat{O}_{1}^{\phantom{\dagger}}}(\boeta) \chi_{\hat{O}_{2}^{\phantom{\dagger}}}(-\bmu)^{*} \tr \Big( D_{\bxi}  D_{\boeta}  D_{\bxi}^{\dagger}D_{\bmu}    \Big)\, .
\end{equation}
Upon evaluating the trace inside the integral to $(2\pi)^{N}\delta^{(2N)}(\boeta + \bmu) e^{-i \bxi^{T} \Omega \boeta}$ we obtain
\begin{equation}
    \tr \Big( D_{\bxi} \hat{O}_{1} D_{\bxi}^{\dagger} \hat{O}_{2}^{\dagger} \Big) = \frac{1}{(2\pi)^{N}}\int \diff\boeta \, \chi_{\hat{O}_{1}^{\phantom{\dagger}}}(\boeta) \chi_{\hat{O}_{2}^{\phantom{\dagger}}}(\boeta)^{*}\, e^{- i \bxi^{T} \Omega \boeta}\, . 
\end{equation}
Plugging this expression into the definition of the dual weight distribution~(\ref{eq:def:B_CV}) then yields
\begin{equation}
\label{eq:B_int_int}
    \bB(r; \hat{O}_1, \hat{O}_2) = \frac{1}{(2\pi)^{N}}\int \diff\boeta \, \chi_{\hat{O}_{1}^{\phantom{\dagger}}}(\boeta) \chi_{\hat{O}_{2}^{\phantom{\dagger}}}(\boeta)^{*}\, \int_{\norm{\bxi} =  r} \diff \bxi\,   e^{- i \bxi^{T} \Omega \boeta}\, .  
\end{equation}
We can identify the integral over $\norm{\bxi} = r$ as a Bessel function via the following integral expression, which we derive in Appendix~\ref{app:Bessel_identity}:
\begin{equation}
\label{eq:Bessel_zonal_spherical_function}
    J_{N-1}\big(r \norm{\boeta}\big) = \frac{\norm{\boeta}^{N-1}}{(2\pi r)^{N}} \int_{\norm{\bxi} = r} \hspace{-0.3cm}\diff\bxi \, e^{-i\bxi^{T} \Omega \boeta}\, .
\end{equation}
 Further splitting the integral over $\boeta$ in Eq.~\eqref{eq:B_int_int} into separate radial and angular components we then obtain the below relationship between the primary and dual weight distributions.
\begin{theorem}[Continuous-variable quantum MacWilliams identity] 
\label{thm:CVMW}
The primary weight distribution $\bA$ and dual weight distribution $\bB$ of any pair of trace class operators $\hat{O}_{1}$ and $\hat{O}_{2}$ are related by the identity
\begin{equation}
\label{eq:CVMW_BfromA}
    \bB(r; \hat{O}_1, \hat{O}_2) = r^{N} \int_{0}^{\infty} \diff x \, \frac{J_{N-1}(r x)}{x^{N-1}} \bA(x; \hat{O}_1, \hat{O}_2) \, .
\end{equation}
\end{theorem}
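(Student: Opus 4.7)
My plan is to follow exactly the sequence of manipulations already assembled in the paragraph preceding the theorem, and show that the one remaining step---a radial/angular decomposition of the $\boeta$ integral---recovers $\bA$. Starting from the definition~\eqref{eq:def:B_CV} of $\bB(r)$, I would first expand both $\hat{O}_{1}$ and $\hat{O}_{2}$ in the displacement basis using $\chi_{\hat{O}_j}(\boeta) = \tr(D_{\boeta}^{\dagger} \hat{O}_j)$, then evaluate the four-point trace $\tr(D_{\bxi} D_{\boeta} D_{\bxi}^{\dagger} D_{\bmu})$ via the conjugation rule $D_{\bxi} D_{\boeta} D_{\bxi}^{\dagger} = e^{-i\bxi^{T}\Omega\boeta} D_{\boeta}$ combined with the Hilbert--Schmidt orthogonality of displacements. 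This produces a delta distribution $(2\pi)^{N}\delta^{(2N)}(\boeta+\bmu)$ that collapses the $\bmu$ integral and yields Eq.~\eqref{eq:B_int_int}.

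The remaining work is to substitute the Bessel surface-integral identity~\eqref{eq:Bessel_zonal_spherical_function} for the inner integral over $\norm{\bxi} = r$, giving
\begin{equation*}
    \bB(r) = r^{N} \int \diff\boeta \, \frac{J_{N-1}(r\norm{\boeta})}{\norm{\boeta}^{N-1}} \, \chi_{\hat{O}_{1}}(\boeta)\,\chi_{\hat{O}_{2}}(\boeta)^{*}\, ,
\end{equation*}
where the explicit $(2\pi)^{N}$ from the Bessel identity cancels the $(2\pi)^{-N}$ prefactor. I would then decompose $\diff\boeta$ into radial and angular pieces, $\int \diff\boeta = \int_{0}^{\infty}\diff x \int_{\norm{\boeta} = x}\diff\boeta$, where the surface measure $x^{2N-1}\diff\Omega^{(2N-1)}$ is precisely the one baked into the definition of $\bA$. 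Because the Bessel factor depends on $\boeta$ only through $x = \norm{\boeta}$, it pulls out of the inner integral, and the remaining angular integral $\int_{\norm{\boeta} = x}\diff\boeta\, \chi_{\hat{O}_{1}}(\boeta)\chi_{\hat{O}_{2}}(\boeta)^{*}$ is literally $\bA(x; \hat{O}_{1}, \hat{O}_{2})$ by definition. This yields the claimed identity~\eqref{eq:CVMW_BfromA}.

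I do not anticipate a substantive obstacle; the argument is essentially bookkeeping of $2\pi$ and $r$ factors. The only mild subtlety is justifying the repeated application of Fubini's theorem, in particular the interchange of the $\bxi$ surface integral with the $\boeta, \bmu$ integrals, which presupposes enough integrability of the characteristic functions. For trace-class $\hat{O}_{1,2}$, the $\chi_{\hat{O}_j}$ are bounded continuous functions, and the kernel $J_{N-1}(rx)/x^{N-1}$ is regular at the origin (behaving as $(r/2)^{N-1}/\Gamma(N)$ as $x \to 0$), so absolute convergence should be straightforward to verify in the cases of interest. If one wanted to treat the most general trace-class operators, the cleanest route would be to establish the identity first on a dense subclass---for instance finite linear combinations of rank-one operators built from Schwartz-class vectors, for which $\chi_{\hat{O}_j}$ is Schwartz---and extend by a continuity argument.
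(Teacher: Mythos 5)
Your proposal is correct and follows essentially the same route as the paper's own derivation: expansion in displacements, evaluation of the four-point trace to a delta function, substitution of the Bessel surface-integral identity~\eqref{eq:Bessel_zonal_spherical_function}, and the radial/angular split of the $\boeta$ integral identifying $\bA$. Your added remark on Fubini and integrability for trace-class operators is a reasonable supplement but does not change the argument.
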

 By comparing with Eq.~\eqref{eq:radial_fourier_transform_background} we observe that this identity is exactly the Fourier transform for the radial part of the product of characteristic functions $\mathbf{r} \mapsto \chi_{\hat{O}_1^{\phantom{\dagger}}}(\mathbf{r}) \chi_{\hat{O}_2^{\phantom{\dagger}}}(\mathbf{r})^{*}$. This relation to the Fourier transform is a central aspect of the MacWilliams identities. It is present also in the classical MacWilliams identities, where the Krawtchouk matrices implement the radial Fourier transformation on functions on the Hamming graph~\cite{stanton_1990_Introduction, dunkl_1976_Krawtchouk}. The same applies to the discrete variable quantum MacWilliams identities of Shor and Laflamme. Here one can regard the Pauli group as a graph, with two Pauli operators connected by an edge if they differ in a single location. The resulting graph is again a Hamming graph and operators can be regarded as functions on this graph via expansion into Pauli operators. The radial Fourier transform on functions on this graph is then implemented by the Krawtchouk matrices and yields the discrete variable quantum MacWilliams identities of Shor and Laflamme when applied to the product of two functions obtained from operators.

It  follows from the fact that the Fourier transform is an involution that the integral kernel in Eq.~\eqref{eq:CVMW_BfromA} is also involutory and that we have the equivalent inverse relation
\begin{equation}
\label{eq:CVMW_AfromB}
    \bA(r) = r^{N} \int_{0}^{\infty} \diff x \, \frac{J_{N-1}(r x)}{x^{N-1}} \bB(x) \, .
\end{equation}
Last, we note that in the single-mode case ($N = 1$) this equation assumes the particularly simple form
\begin{equation}
\label{eq:CVMW_BfromA_single_mode}
  \bB(r) = r \int_{0}^{\infty} \diff x \, J_{0}(r x) \bA(x) \, .  
\end{equation}

\subsection{Examples}
Let us consider some examples of the continuous-variable quantum MacWilliams identity. In particular, we will consider examples where $\hat{O}_1 = \hat{O}_2= \Pi$ for the case of $\Pi$ the projector onto a coherent state, a Fock state, the codespace of a single mode cat code, and the codespace of a general GKP code.

\begin{figure}
    \centering
    \includegraphics[width=\linewidth]{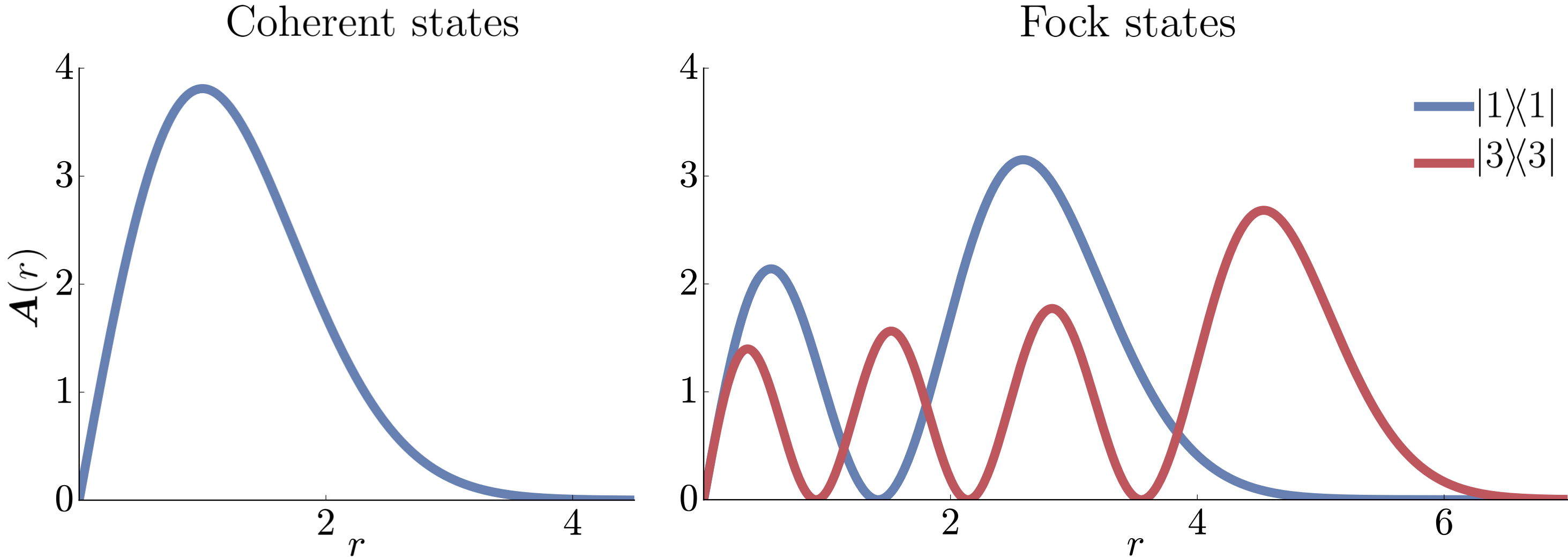}
    \caption{Weight distributions of example pure states. In the pure-state case the primary and the dual weight distributions coincide, i.e.\ $\bA = \bB$, and only the primary weight distribution is shown. Left: The weight distributions of a coherent states possess a single peak close to the origin, reflecting the localization of the state in phase space. The weight distribution is identical for all coherent states $\ket{\alpha}$. Right: The weight distributions of the first Fock state $\ket{1}$ and the third Fock state $\ket{3}$ are shown. In general, the weight distribution of the $n$-th Fock state oscillates $n+1$ times before decaying rapidly.}
    \label{fig:coherent_fock_AB}
\end{figure}

\subsubsection{Coherent states}
For a general coherent state projector on a single mode $\Pi_{\alpha} := \ket{\alpha}\bra{\alpha}$ with $\alpha \in \C$ the characteristic function can be evaluated directly and yields
\begin{equation}
\chi_{\Pi_{\alpha}^{\phantom{\dagger}}}(\bxi) \coloneqq \trb{D_{\bxi}^\dagger \ket{\alpha}\bra{\alpha}} = e^{i \boldsymbol{\alpha}^T \Omega \bxi} e^{- \frac{1}{4} \norm{\bxi}^2}\, ,
\end{equation}
where $\boldsymbol{\alpha} = \sqrt{2} (\re \alpha,\, \im \alpha)^T$ is a real vector associated to the complex number $\alpha$.
We immediately obtain the weight distribution 
\begin{equation}
    \bA(r) = 2 \pi r \Big|\chi_{\Pi_{\alpha}^{\phantom{\dagger}}}\big(\norm{\bxi} = r\big) \Big|^{2} = 2 \pi r e^{-\frac{1}{2} r^2}. 
\end{equation}
Direct evaluation further yields $\bB(r) = \bA(r)$. In fact, based on general considerations, we will see below that this is the case for all weight distributions of rank-1 orthogonal projectors. The continuous-variable quantum MacWilliams identities now reduce to the Bessel integral identity
\begin{equation}
  2 \pi r e^{-\frac{1}{2} r^2} = r \int_{0}^{\infty} \diff x \, J_{0}(r x) 2 \pi x e^{-\frac{1}{2} x^2} 
\end{equation}
which can be interpreted as an eigenvector equation for the employed Bessel integral kernel. 

\subsubsection{Fock states}
For the Fock-state projector $\Pi_{n} = \ket{n}\bra{n}$ we can again obtain the characteristic function as
\begin{equation}
    \chi_{\Pi_{n}^{\phantom{\dagger}}}(\bxi) = L_{n} \Big( \frac{1}{2} \norm{\bxi}^2\Big) e^{- \frac{1}{4} \norm{\bxi}^2}\, ,
\end{equation}
where $L_{n}$ denotes the $n$-th Laguerre Polynomial~\cite{barnett_2002_Methods}. The continuous-variable quantum MacWilliams identities in this case reduce to the integral identity
\begin{equation}
    2 \pi r L_{n} \Big( \frac{1}{2} r^2 \Big)^2 e^{- \frac{1}{2} r^2} = r \int_{0}^{\infty} \diff x \, J_{0}(r x) 2 \pi x L_{n} \Big( \frac{1}{2} x^2 \Big)^2 e^{- \frac{1}{2} x^2}\, .
\end{equation}
For a plot of the above weight distributions see Fig.~\ref{fig:coherent_fock_AB}.

\subsubsection{Cat codes}
\begin{figure}
    \centering
    \vspace{-1.5cm}
    \includegraphics[width=\linewidth]{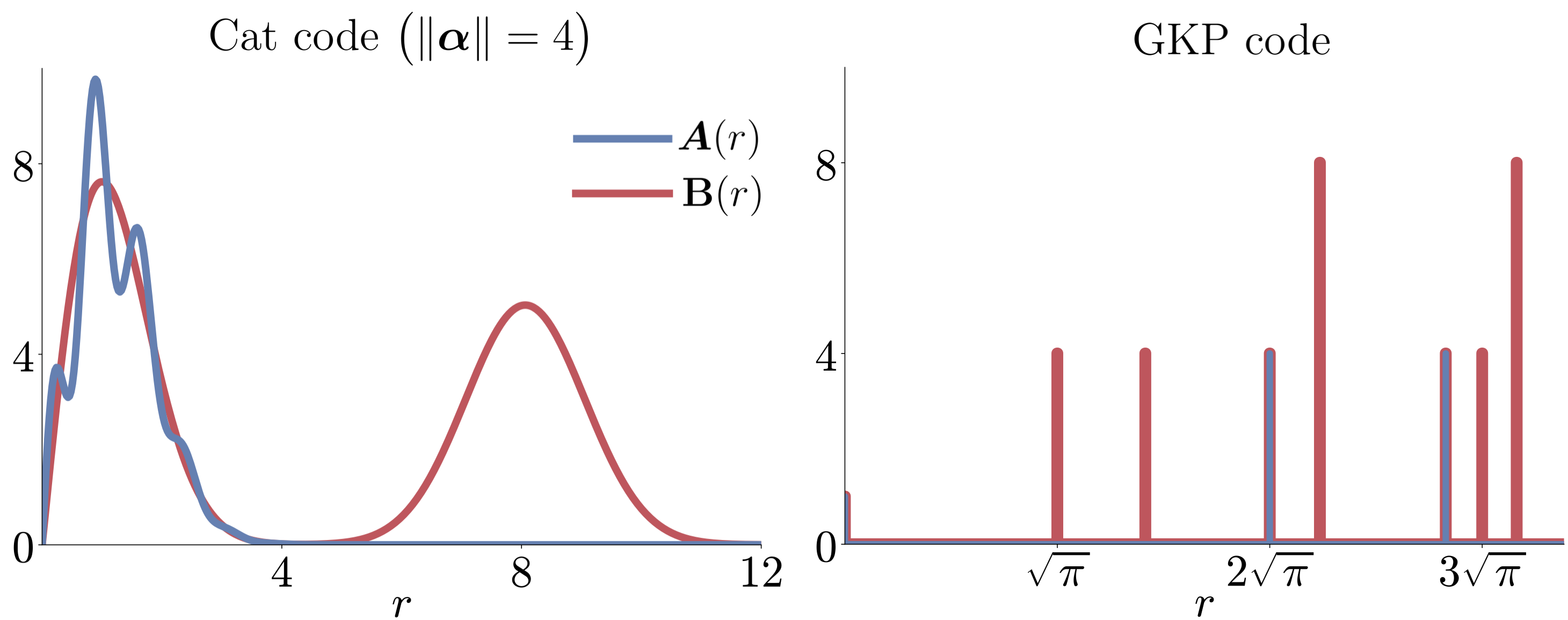}
    \caption{Weight distributions of example error correcting codes encoding a logical qubit. Left: Weight distributions of a cat code with well-separated coherent states ($\norm{\boldsymbol{\alpha}} = 4$). The distance between the coherent states is reflected in the dual distribution $\bB$, which has a second peak at $2 \norm{\boldsymbol{\alpha}}$, the length of a displacement required to shift a coherent state $\ket{\pm\alpha}$ onto the opposite coherent state $\ket{\mp\alpha}$. Right: Weight distributions of the single-mode square GKP code. Both primary and dual weight distributions are sums of integer multiples of delta functions, displayed as vertical lines. The primary weight distribution $\bA$ is the distribution of lengths of vectors within the lattice $\CL$ underlying the GKP stabilizer $\mathcal{S}$. For the square GKP code the this lattice is $\CL = 2\sqrt{\pi}\Z^{2}$ and contains a single vector of length $0$, as well as $4$ vectors of length $2\sqrt{\pi}$  and $\sqrt{8 \pi}$, among longer ones. This is reflected in the heights of peaks of the primary distribution $\bA$ at the respective lengths. The dual distribution $\bB$ contains equivalent information about the dual lattice; $\CL^{\perp} = \frac{1}{2}\CL$ in the case of a square GKP code with $K=2$. As GKP states are unnormalizable, both distributions are also unnormalizable and have peaks at arbitrarily large arguments $r$. The distance of the square GKP code is $\sqrt{\pi}$, indicated by the smallest argument $r$ where $\bB(r) > \bA(r).$}
    \label{fig:cat_GKP_AB}
\end{figure}

The single mode, two-legged cat code is a quantum error correcting code which protects against dephasing errors~\cite{albert_2022_Bosonic}. Its code space is the span of two coherent states $\mathcal{C} = \text{span}\big\{ \ket{\alpha}, \ket{-\alpha}\big\}$. For simplicity, we consider the case of well-separated coherent states, i.e.\ the case where the inner product $\langle \alpha | -\alpha \rangle = \text{exp}(- \norm{\boldsymbol{\alpha}}^2)$ is small. In this regime one can neglect the overlap between the well-separated coherent states and write the corresponding approximate cat code projector as $\Pi = \ketbra{\alpha} + \ketbra{-\alpha}$. It is straightforward to obtain the respective characteristic function 
\begin{equation}
    \chi_{\Pi}(\boldsymbol{\bxi}) = 4 e^{- \frac{1}{4} \norm{\bxi}^2} \cos\big(\boldsymbol{\alpha}^{T}\Omega\bxi\big)\, , 
\end{equation}
as well as the primary weight distribution
\begin{equation}
    \bA(r) = \int_{\norm{\bxi} = r} \diff \bxi \, |\chi_{\Pi}(\boldsymbol{\bxi})|^2 = 4 \pi r e^{-\frac{1}{2}r^2}\Big(1 + J_{0}(2 \norm{\boldsymbol{\alpha}}r)\Big)\, .
\end{equation}
The dual weight distribution can then be obtained either by direct computation, or through the MacWilliams identities~\eqref{eq:CVMW_BfromA} as 
\begin{equation}
\begin{split}
    \bB(r) & = 4 \pi r \int_{0}^{\infty} \diff x \, J_{0}(rx) x e^{-\frac{1}{2}x^2}\Big(1 + J_{0}(2 \norm{\boldsymbol{\alpha}}x)\Big)\\ &
    = 4 \pi r e^{-\frac{1}{2} r^2} \Big( 1 + e^{-2\norm{\boldsymbol{\alpha}}^2} I_{0}(2 \norm{\boldsymbol{\alpha}} r)\Big)\, ,
\end{split}
\end{equation}
where $I_{0}$ is the modified Bessel function of the first kind. The result follows from a known identity for integrals of Bessel functions~\cite{gradshteyn_2022_Table}. For a plot of these weight distributions see Fig~\ref{fig:cat_GKP_AB}.
\subsubsection{GKP codes}

Idealized GKP codes formally define code spaces containing infinite energy, unnormalizable codewords~\cite{gottesman_2001_Encoding}. As a consequence one cannot write down a well-defined trace-class projector associated to a GKP code, and these codes do not immediately fall within the framework developed above. However, due to their significance to quantum coding theory, we here give a separate derivation of the MacWilliams identities also for the case of ideal GKP codes. It is possible to derive expressions for weight distributions for GKP codes by associating to these codes formal projectors, such as in Eq.~\eqref{eq:gkp_ideal_projector}. However, we will take another approach, emphasizing the fact that GKP codes are stabilizer codes. Recall first that the stabilizer group of a GKP code consists of displacement operators. That is
\begin{equation}
    \mathcal{S} = \Big\langle e^{i \phi_1} D(\bxi_1), \dots, e^{i \phi_{2N}}D(\bxi_{2N}) \Big\rangle
\end{equation}
such that the vectors $\{\bxi_1, \dots, \bxi_{2N}\}$ generate a full rank lattice $\CL$, with symplectic inner products between lattice vectors in $2\pi \Z$~\cite{conrad_2022_GottesmanKitaevPreskill, burchards_2024_Fiber}. The normalizer of the GKP stabilizer within the displacement operators modulo phases is then simply given by the set of logical displacement Pauli operators 
\begin{equation}
    \mathcal{N(S)} = \Big\{ D(\bxi) \Bigmid \bxi \in \CLp \Big\}\, .
\end{equation}
Here $\CLp \subseteq \CL$ is the symplectic dual lattice of $\CL$, defined by 
\begin{equation}
    \CLp = \Big\{ \bxi^{\perp} \in \R^{2n} \Bigmid \bxi^{T} \Omega \bxi^{\perp} \in 2 \pi \Z \, \text{ for all } \bxi \in \CL  \Big\}\, .
\end{equation}
Let us recall now that the weight distributions of qudit stabilizer codes possess a particularly simple combinatorial interpretation. In particular, for a qudit stabilizer code of dimension $K$, the primary distribution $\bA_{i}/K^2$ simply counts the number of stabilizer elements of a certain Pauli weight, while the dual weight distribution $\bB_{i}/K$ counts the number of normalizer elements of a certain Pauli weight~\cite{gottesman_1997_Stabilizer}. In the continuous-variable case the severity of a displacement operator is given by the norm of its argument, a real number. The appropriate generalization to the continuous-variable case is then to introduce two distributions, 
$\bA$ and $\bB$, whose integrals over a given interval yield the number of stabilizer or normalizer elements with weights falling within that interval. Concretely, let us denote by $\CD$ the set of distances assumed by vectors within $\CL$ 
\begin{equation}
    \CD = \Big\{d\in \R \Bigmid  \exists \bxi \in \CL: \norm{\bxi} = d\Big\}
\end{equation}
and by $N_{d}$ their respective multiplicities
\begin{equation}
    N_{d} = \Big|\Big\{\bxi \in \CL \Bigmid  \norm{\bxi} = d\Big\}\Big|\, ,
\end{equation}
as well as by $D^{\perp}$ and $N_{d}^{\perp}$ the corresponding quantities for $\CLp$. Then we have 
\begin{equation}
\label{eq:A_ideal_gkp}
    \bA(r) = K^{2} \sum_{d \in \CD} N_{d} \delta(r - d) 
\end{equation}
and 
\begin{equation}
\label{eq:B_ideal_gkp}
    \bB(r) = K \sum_{d\in \CD^{\perp}} N_{d}^{\perp} \delta(r - d) \, .
\end{equation}
Let us now show explicitly that the two weight distributions~\eqref{eq:A_ideal_gkp} and~\eqref{eq:B_ideal_gkp} are related by the MacWilliams identities. Let's consider the case of the dual distribution $\bB$ acting on a rapidly decaying function $f$. We have 
\begin{equation}
    \begin{split}
     \int_{0}^{\infty} \diff r\,  \bB(r) f(r) 
     &= K \int_{0}^{\infty} \diff r\,  \sum_{d\in \CD^{\perp}} N_{d}^{\perp} \delta(r - d) f(r) \\
     &= K \int_{0}^{\infty} \diff r\,  \sum_{\bxi^{\perp} \in \CLp } \delta\big(r - \norm{\bxi}\big) f(r) \\
    &= K \sum_{\bxi^{\perp} \in \CLp } f(\norm{\bxi}) \, .
    \end{split}
\end{equation}
We now emply the Poisson summation formula which states that for any sufficiently rapidly decaying function $\sum_{\bxi^{\perp} \in \CLp } \hat{f}(\bxi^{\perp}) = |\text{det } \CL/\sqrt{2\pi} |^{\frac{1}{2}} 
 \sum_{\bxi \in \CL} f(\bxi) $~\cite{elkies__Theta}. Together with Eq.~\eqref{eq:radial_fourier_transform_background} for the Fourier transform of a radial function, and  the fact that for a GKP codes lattice the relationship $K^2 = |\text{det } \CL/\sqrt{2\pi} |$ connects code size and lattice determinant \cite{conrad_2022_GottesmanKitaevPreskill}, we can then rewrite the above expression as
\begin{equation}
\begin{split}
& = K^{2} \sum_{\bxi \in \CL } \hat{f}(\norm{\bxi}) \\
& = K^{2} \sum_{\bxi \in \CL }\frac{1}{\norm{\bxi}^{N-1}} \int_{0}^{\infty} \diff r \, r^{N}J_{N-1}(r \norm{\bxi}) f(r) \\
& = K^{2} \int_{0}^{\infty} \diff r \, r^{N} \int_{0}^{\infty} \diff x \, \frac{J_{N-1}(r x)}{x^{N-1}} \sum_{\bxi \in \CL } \delta\big(x - \norm{\bxi}\big) f(r) \\
    & = \int_{0}^{\infty} \diff r \, \Bigg(r^{N} \int_{0}^{\infty} \diff x \, \frac{J_{N-1}(r x)}{x^{N-1}} \bA(x) \Bigg) f(r) \, .
\end{split}
\end{equation}
As two distributions are equal if they have identical action on functions, we observe the the quantity in parenthesis must be equal to $\bB$. Comparing with Eq.~\eqref{eq:CVMW_BfromA} this is exactly the continuous-variable quantum MacWilliams identity previously derived for trace-class operators. 

The above MacWilliams identity for GKP codes is a consequence of the general fact that the length distribution of a lattice determines that of its dual lattice, sometimes stated in the language of theta functions~\cite{elkies__Theta}. Such theta functions have been previously employed to analyze concatenated stabilizer-GKP codes, as well as the decoding problem in GKP codes~\cite{conrad_2022_GottesmanKitaevPreskill, conrad_2024_Good}.

\subsection{Properties of weight distributions}
Let us investigate some of the properties of the distributions $\bA$ and $\bB$. First, we will derive some general properties which hold for any operator pair $\hat{O}_1, \hat{O}_2$. We then go on to derive some additional properties for the special case where $\hat{O}_1 = \hat{O}_2 = \Pi$ is the codespace projector of a QECC of distance $d$.
\subsubsection{Invariance properties}
\label{sec:AB_euclidean_invariance_properties}
The distributions $\bA, \bB$ are invariant under the simultaneous transformation 
\begin{equation}
    \hat{O}_{1,2} \mapsto D(\bxi) \hat{O}_{1,2} D(\bxi)^{\dagger} 
\end{equation}
as well as under passive linear transformations, i.e.\ those Gaussian unitary operators $U$ that act on displacement operators as $U D(\bxi) U^{\dagger} = D(Q \bxi)$ with $Q$ an orthogonal map. Together this means that both weight distributions are invariant under the motion group of $2N$-space, the Euclidean group $E(2N)$. The analogous property of Shor-Laflamme enumerators is their invariance under products of arbitrary transversal unitaries and $\swap$ gates~\cite{shor_1997_Quantum}.

\subsubsection{Normalization of weight distributions}
Using the property of the 2-mode $\swap$ gate that $\tr \swap (\hat{O}_1 \otimes \hat{O}_2) = \tr \hat{O}_1 \hat{O}_2$ for any pair of trace-class operators $\hat{O}_1$ and $\hat{O}_2$, together with its decomposition into displacement operators, $\swap = \frac{1}{(2\pi)^{N}} \int \diff \bxi \, D_{\bxi}^{\dagger} \otimes D_{\bxi}$, we can show that
\begin{equation}
\begin{split}
\label{eqs:A_sum}
     \int_{0}^{\infty} \diff r\, \bA(r)& =  \int \diff \bxi \, \trb{(D_{\bxi}^{\dagger} \otimes D_{\bxi}) (\hat{O}_{1} \otimes \hat{O}_{2}^{\dagger})}\\ & =  \, (2 \pi)^{N} \trb{\swap (\hat{O}_{1} \otimes \hat{O}_{2}^{\dagger})} = (2 \pi)^{N} \trb{ \hat{O}_{1} \hat{O}_{2}^{\dagger}} \, .
\end{split}
\end{equation}
In a similar way we show in Appendix~\ref{app:Integral_B} that
\begin{equation}
\label{eqs:B_sum}
    \int_{0}^{\infty} \diff r \,  \bB(r)  = \int \diff \bxi\, \tr \big( D_{\bxi} \hat{O}_{1} D_{\bxi}^{\dagger} \hat{O}_{2} \big)  
     = (2 \pi)^{N} \trb{\hat{O}_{1}} \trb{\hat{O}_{2}^{\dagger}}\, .
\end{equation}
In this sense $\bA$ and $\bB$ are radial decompositions of the respective operator traces. Analogous properties of Shor-Laflamme enumerators have been derived in a similar manner in Ref.~\cite{huber_2018_Bounds}. 

\subsubsection{Weight distributions of error-correcting codes}
Consider now the weight distributions $\bA$ and $\bB$ for the special case where $\hat{O}_1 = \hat{O}_2 = \Pi$ with $\Pi$ a distance-$d$ quantum error correcting code projector encoding a logical subspace of dimension $K = \tr \Pi$. The quantum error correcting conditions, established by Knill and Laflamme, state that the projector satisfies 
\begin{equation}
\label{eq:knill_laflamme_conditions}
    \Pi D(\bxi) \Pi = c(\bxi) \Pi \quad \text{for all}\quad \norm{\bxi} < d 
\end{equation}
where $c$ is a scalar function~\cite{knill_2000_Theory}. We may follow Shor and Laflamme~\cite{shor_1997_Quantum} and write the expressions for both weight distributions by using an orthonormal basis $\ket{i}$ for the codespace
\begin{equation}
\begin{split}
\label{eq:AB_ONB_decomposition}
    \bA(r) & = \int_{\norm{\bxi} = r}\diff \bxi\, \Big| \sum_{i} \bra{i} D(\bxi) \ket{i}\Big|^{2} \\
    \bB(r) & = \int_{\norm{\bxi} = r} \diff \bxi\, \sum_{i,j}\big| \bra{i} D(\bxi) \ket{j}\big|^{2} \, .
\end{split}
\end{equation}
It follows immediately that $0 \leq \bA, \bB  $. The Cauchy-Schwarz inequality for the Hilbert-Schmidt inner product between the codespace projector $\Pi$ and the operator $\Pi D(\bxi) \Pi$ further shows that $\bA \leq K \bB$. In particular, for pure state projectors $\Pi = \ket{\psi}\! \bra{\psi}$ this inequality together with Eqs.~(\ref{eqs:A_sum}, \ref{eqs:B_sum}) implies that $\bA = \bB$. For a QECC projector we obtain from the Knill-Laflamme conditions that for $\norm{\bxi} < d \colon$ 
\begin{equation}
    \trb{D_{\bxi} \Pi} \trb{D_{\bxi}^{\dagger} \Pi } = K c(\bxi) \trb{D_{\bxi}^{\dagger} \Pi} = K \trb{D_{\bxi} \Pi D_{\bxi}^{\dagger}\Pi }
\end{equation}
and thus $\forall r < d\colon \bA(r) = K \bB(r)$, saturating the previous inequality for all arguments below the distance. 

\subsection{Weight distributions in terms of fidelities}
Both weight distributions of a projector $\Pi$ have physical interpretations in terms of fidelities. One such fidelity, the entanglement fidelity, is a measure of how well a state and its entanglement with any outside degrees of freedom are preserved under a noise channel~\cite{nielsen_1996_entanglement}. A general result from Ref.~\cite{schumacher_1996_Sending} on the entanglement fidelity states that given any quantum channel $\CE$ with Kraus operators $K_{i}$ the entanglement fidelity of a state w.r.t.\ the channel can be expressed as

\begin{equation}
    F_{e}(\rho; \CE) = \sum_{i} \trb{K_{i} \rho} \trb{K_{i}^{\dagger} \rho } \, .
\end{equation}
 Let us introduce the uniformly random distance-$r$ displacement channel on $N$ modes
\begin{equation}
    \CN_{r}(\rho) \coloneqq \frac{1}{S^{(2N-1)}(r)} \int_{\norm{\bxi} = r}\diff \bxi\, D_{\bxi} \rho D_{\bxi}^{\dagger}\, ,
\end{equation}
where $S^{(n)}(r)$ denotes the surface area of the $n$-sphere of radius $r$. We see that the primary weight distribution can be expressed in terms of the entanglement fidelity of the maximally mixed logical state $\overline{\rho} = \Pi/K$ w.r.t.\ the displacement channel
\begin{equation}
    \frac{1}{K^{2} S^{(2N-1)}(r)} \bA(r) = F_{e}(\overline{\rho}, \CN_{r}) \, .
\end{equation}
Similarly, the dual weight distribution can be expressed using the displacement channel as follows
\begin{equation}
\label{eq:B_tr_rho_Nrho}
     \frac{1}{K^{2} S^{(2N-1)}(r)} \bB(r) = \trb{\overline{\rho} \CN_{r}(\overline{\rho})} \, ,
\end{equation}
and this expression can be interpreted as the probability of a code state remaining within the code space upon traversing $\CN_{r}$.

\section{Lower bound on average occupation number from weight distributions}
\label{sec:lower_bound_from_B}
The weight distributions $\bA$ and $\bB$ of a projector $\Pi$ not only contain information about its trace and its error correction properties, but also about the expected occupation number $\overline{n} = \frac{1}{K}\tr \hat{n} \Pi$ of a maximally mixed logical state. In particular, the second moment of the $\bB$ distribution provides a lower bound on the expected occupation number. In order to show this we will make use of two phase-space representations of a quantum state, the Husimi $Q$-function as well as the Glauber-Sudarshan $P$ representation. Throughout this section $\bA, \bB$ will refer to the distributions of the maximally mixed logical state $\rho = \frac{1}{K} \Pi$. For simplicity, we will only provide derivations for the single mode ($N = 1$) setting in this section.

\subsection{Properties of phase-space distributions}
Let us recall that the Husimi $Q$-function $Q(\alpha)$ and the Glauber-Sudarshan $P$ representation $P(\alpha)$ of a quantum state $\rho$ are defined in terms of coherent states as 
\begin{equation}
    Q(\alpha) \coloneqq \frac{1}{\pi} \bra{\alpha} \rho \ket{\alpha} \geq 0
\end{equation}
and
\begin{equation}
    \rho = \int_{\C} \diff \alpha\,  P(\alpha) \ket{\alpha} \bra{\alpha} \, .
\end{equation} 
We denote by $\varphi(\alpha) = \frac{1}{\pi}|\langle 0 | \alpha \rangle |^{2} = \frac{1}{\pi} e^{-|\alpha|^{2}}$ the Gaussian function and note that the $P$ and $Q$ functions are connected via the relation 
\begin{equation}
    Q(\alpha) = (\varphi \ast P)(\alpha) \coloneqq \int_{\C} \diff\beta\,  \varphi(\alpha - \beta) P(\beta)\, ,
\end{equation}
where $\ast$ denotes the convolution operation. 
The phase-space distributions $P$ and $Q$ are those corresponding to normal and antinormal operator ordering respectively. Hence, computing moments under these distributions yields expectation values of correspondingly ordered operators.

In particular, denoting by $\rho_{\bxi} = D_{\bxi} \rho D_{\bxi}^\dagger$ the displaced state, its expected occupation number can be expressed in terms of the $P$ distribution as follows:
\begin{equation}
\label{eq:P_shifted_occ_number}
    \tr (\hat{n} \rho_{\bxi}) = \int_{\C} \diff \alpha\,  P(\alpha)\tr \big(\hat{n} D_{\bxi} \ket{\alpha} \bra{\alpha}D_{\bxi}^\dagger\big)
    = \int_{\C} \diff \alpha\,  P(\alpha) |\alpha+\tilde{\xi}|^{2} \, ,
\end{equation}
where we have denoted by $\tilde{\xi} = \frac{1}{\sqrt{2}}(\bxi_1 + i \bxi_2)$ the complex number associated to the vector $\bxi \in \R^2$. The corresponding antinormal ordered expectation value is a moment of the $Q$-function
\begin{equation}
\label{eq:Q_shifted_occ_number}
    \trb{\hat{n} \rho_{\bxi}} +1 =\trb{a a^\dagger \rho_{\bxi}} = \int_{\C} \diff \alpha\,  Q(\alpha) |\alpha+\tilde{\xi} |^{2} \, ,
\end{equation}
where $a$ and $a^{\dagger}$ denote the creation and annihilation operators.

\subsection{Dual weight distribution in terms of the $P$ and $Q$ functions}
It is straightforward to obtain an expression for the dual weight distribution $\bB$ in terms of the $P$-function. Combining the definition of the dual distribution with that of the $P$-function we find
\begin{equation}
\begin{split}
    \bB(r) & = \int\limits_{\norm{\bxi}=r}\diff\bxi \int_{\C} \diff\alpha \diff\beta\, P(\alpha) P(\beta) \tr \Big( D_{\bxi}^{\dagger} \ketbra{\alpha} D_{\bxi} \ketbra{\beta} \Big) \\ &
    = \pi \int\limits_{\norm{\bxi}=r}\diff\bxi \int_{\C} \diff\alpha \diff\beta\, P(\alpha) P(\beta) \varphi\Big(\alpha - \beta -  \tilde{\xi}\Big) \\ &
     = \pi \int\limits_{\norm{\bxi}=r}\diff\bxi \int_{\C} \diff\alpha \diff\beta \,P(\alpha) P(\beta)  \Big( \varphi(x) \ast_{x} \delta \big(x - \beta -  \tilde{\xi}\big)\Big)(\alpha)\\ &
    = \pi \int\limits_{\norm{\bxi}=r}\diff\bxi \int_{\C} \diff\alpha \diff\beta \,Q(\alpha) P(\beta) \delta\Big(\alpha - \beta -  \tilde{\xi}\Big)\, .
\end{split}
\end{equation}
We can now obtain an expression for the second moment of the $\bB$ distribution purely in terms of the $Q$-function
\begin{equation}
\begin{split}
\label{eq:B_second_moment_PQ_QQ}
    \int_{0}^{\infty} \diff r\, \bB(r) r^{2} & = \pi \int \diff\bxi\, \norm{\bxi}^{2} \int_{\C} \diff \alpha \diff\beta\,  Q(\alpha) P(\beta) \delta\Big(\alpha - \beta -  \tilde{\xi}\Big) \\ & 
    = 4 \pi \int_{\C} \diff\alpha \diff\beta\, Q(\alpha) P(\beta) |\alpha-\beta|^{2} \\ & 
    = 4 \pi \int_{\C}  Q(\alpha) Q(\beta) |\alpha-\beta|^{2} \diff\alpha \diff\beta - 4 \pi 
    \, .
\end{split}
\end{equation}
Here we have made use of Eqs.~(\ref{eq:P_shifted_occ_number}) and~(\ref{eq:Q_shifted_occ_number}) in the last step.
This expression for the second moment of $\bB$ purely in terms of the Husimi $Q$-function will be our main tool in deriving a lower bound on the expected occupation number of the state $\rho$.
First, however, note that we can also do the opposite and lower bound moments of $\bB$ by the occupation number expectation value minimized over shifted versions of the state $\rho$. Specifically, we have, from equations~(\ref{eq:B_second_moment_PQ_QQ}) and~(\ref{eq:P_shifted_occ_number}), that
\begin{equation}
\begin{split}
\label{eq:B_second_moment_lower_bound}
    \min_{\bxi}\trb{\hat{n} \rho_{\bxi}}  \leq 
     \int_{\C} \diff\alpha\,  Q(\alpha) \trb{\hat{n} \rho_{\boldsymbol{\alpha}}}
     = \frac{1}{4 \pi}\int_{0}^{\infty} \diff r\, \bB(r) r^{2} 
\end{split}
\end{equation}
with $\boldsymbol{\alpha} = \sqrt{2}(\re \alpha, \, \im \alpha)$.
To obtain an upper bound on the second moment in terms of occupation numbers let us first note that for any pair of complex numbers one has that $|\alpha-\beta|^2 \leq 2 (|\alpha|^2 + |\beta|^2)$. This inequality together with Eq.~(\ref{eq:Q_shifted_occ_number}) and normalization of the $Q$-function then implies that
\begin{equation}
\begin{split}
    \int_{\C} \diff\alpha \diff\beta \,  Q(\alpha) Q(\beta) |\alpha-\beta|^{2} & 
    \leq 2\int_{\C} \diff\alpha \diff\beta \,  Q(\alpha) Q(\beta) \big(|\alpha|^{2} + |\beta|^{2} \big) \\ &
    = 4(\overline{n} + 1) \, .
\end{split}
\end{equation}
Inserting into Eq.~\eqref{eq:B_second_moment_PQ_QQ} provides us with the following upper bound on the second moment in terms of occupation numbers
\begin{equation}
    \int_{0}^{\infty} \diff r\, \bB(r) r^{2} \leq 4 \pi (4 \overline{n} + 3) \, .
\end{equation}
Suitably generalized to the $N$-mode setting the above calculation yields multi-mode versions of this upper bound and the lower bound~\eqref{eq:B_second_moment_lower_bound} which read
\begin{equation}
\label{eq:B_multimode_occ_doublesided_inequality}
    \min_{\bxi} \tr \hat{n} \rho_{\bxi} \leq \frac{1}{4 \pi} \int_{0}^{\infty} \diff r\, \bB(r) r^{2}\leq (4 \overline{n} + 3N) \, ,
\end{equation}
with the total occupation number operator $\hat{n} = \sum_{i} \hat{n}_{i}$. From an intuitive point of view the inequalities~\eqref{eq:B_multimode_occ_doublesided_inequality} encode a relationship between the extent of a state $\rho$ in phase-space and its occupation number. As a consequence of the fact that it is preserved under Euclidean operations on phase-space, as discussed in Section~\ref{sec:AB_euclidean_invariance_properties}, the $\bB$ distribution must be ignorant to the exact occupation number of a state. However, it encodes information about the total extent of the characteristic function, and hence the $Q$-function, in phase-space. As occupation number is a quadratic function on phase-space, information about the concentration of the non-negative $Q$-function then yields a lower bound on the occupation number.

\section{Error detection}
\label{sec:approximate}
Above we have shown that the distances of ideal QECCs in the sense of Knill and Laflamme are reflected in the weight distributions of codes. Let us now discuss the problem of error detection, and a relaxation of this problem away from the ideal case. 
\subsection{Exact error detection}
Consider the following problem: An unknown state $\rho$ is encoded into a subspace $\CC$ with associated projector $\Pi$. An adversary is then given the choice to either apply a quantum channel $\CN(\cdot) = \sum_{i} E_{i} \cdot E_{i}^\dagger$ or to leave the state unchanged. We would like to perform a measurement which detects whether the state has been modified with maximal probability and which does not disturb the original state if unmodified. The optimal such probability is achieved by a measurement consisting of the code space projector and its orthogonal complement $\{\Pi, \Pi^\perp\}$. Under what circumstances can we guarantee recovery of the original state upon not detecting an error? This is the case if states $\CN(\rho)$ are guaranteed to collapse onto $\rho$ when measured to lie within $\Pi$, i.e.\ whenever $\forall i\colon \Pi E_i \Pi \propto \Pi$. These are exactly the quantum error correction conditions we have discussed before.

\subsection{Approximate error detection}
Let us now consider the following approximate version of the above error detection
problem: A maximally entangled state $\ket{\psi}$ is created between the code space $\CC$ on system $Q$ and a reference
system $R$ of dimension $K = \textnormal{dim}(\CC)$. An adversary is given the choice to either apply a quantum channel $\CN$ to the code system $Q$ or to leave the state unchanged. Again we would like to perform a measurement which detects whether the state has been modified with maximal probability and which does not disturb the original state if unmodified. As discussed above, the detection probability is maximized by the measurement $\{\ketbra{\psi}, \mathds{1}-\ketbra{\psi} \}$. However, suppose we do not have
access to the reference system $R$ and are restricted to performing measurements on the code system $Q$ only. The optimal error detection probability is then again achieved by the measurement $\{\Pi, \Pi^\perp\}$. What is the probability $p_{\err}$ that the original entangled state $\ketbra{\psi}$ has changed in spite of no error having been detected following application of the noise channel $\CN$? It can be shown that $p_{\err} = 0$ exactly if $\CC$ is a QECC, i.e.\ satisfies the Knill-Laflamme conditions. Let us then define a quality-$\epsilon$ quantum error detecting code (QEDC) to be one that achieves $p_{\err} \leq \epsilon$, that is it recovers the entangled state $\ketbra{\psi}$ in fraction $1-\epsilon$ of cases where noise has been applied and no error detected.
\begin{definition}[Approximate quantum error detection code]
    A $K$-dimensional subspace $\CC$ of the Hilbert space of $N$ modes is an $[[N, K, d, \epsilon]]$-QEDC if it achieves $p_{\err} \leq \epsilon$ for arbitrary convex combinations of the channels $\CN_{r}$ with $ r < d$.
\end{definition}
Let us emphasize again that for $\epsilon = 0$ this definition corresponds to an ideal error correcting code of distance $d$. We can now show that the above notion is directly related to the weight distributions of the code. 
\begin{lemma}
    Let $\bA$ and $\bB$ be the weight distributions of the projector $\Pi$ onto a subspace $\CC$ of dimension $K$ on $N$ modes. Then $\CC$ is an $[[N, K, d, \epsilon]]$-QEDC if and only if
    \begin{equation}
        \label{eq:A_B_frac_exceeds_epsilon}
        \frac{\bA(r)}{K\bB(r)} \geq 1- \epsilon
    \end{equation}
    for all $r < d$.
\end{lemma}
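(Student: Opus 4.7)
The plan is to compute $p_{\err}$ explicitly for each elementary noise channel $\CN_r$, identify the result as $1-\bA(r)/(K\bB(r))$, and then reduce the convex-combination case to the pointwise one by linearity. The key technical tool is the standard ``operator--state'' identity $\bra{\psi}(\id_R \otimes M)\ket{\psi} = \frac{1}{K}\trb{\Pi M}$, valid for every operator $M$, satisfied by the maximally entangled state $\ket{\psi} = \frac{1}{\sqrt{K}}\sum_i \ket{i}_R\ket{i}_Q$ built from any orthonormal basis $\{\ket{i}\}$ of $\CC$.

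First I would carry out the calculation for a single displacement $D_\bxi$ applied to the code system $Q$. By the operator--state identity and cyclicity of the trace, the probability of the $\ok$ outcome is
\begin{equation*}
p_{\ok}(\bxi) \;=\; \bra{\psi}(\id_R \otimes D_\bxi^\dagger \Pi D_\bxi)\ket{\psi} \;=\; \frac{1}{K}\trb{D_\bxi \Pi D_\bxi^\dagger \Pi},
\end{equation*}
while the overlap of the (unnormalized) post-projection state on $RQ$ with $\ketbra{\psi}$ is
\begin{equation*}
\big|\bra{\psi}(\id_R \otimes \Pi D_\bxi)\ket{\psi}\big|^2 \;=\; \frac{1}{K^2}|\trb{\Pi D_\bxi}|^2 \;=\; \frac{1}{K^2}|\chi_\Pi(\bxi)|^2.
\end{equation*}
Averaging both expressions over the sphere $\norm{\bxi}=r$ with respect to the surface measure, and recognizing the integrals directly as $\bB(r)$ and $\bA(r)$ respectively, I obtain
\begin{equation*}
p_{\ok}(r) \;=\; \frac{\bB(r)}{K\, S^{(2N-1)}(r)}, \qquad 1 - p_{\err}(r) \;=\; \frac{\bA(r)/\big(K^2 S^{(2N-1)}(r)\big)}{\bB(r)/\big(K\, S^{(2N-1)}(r)\big)} \;=\; \frac{\bA(r)}{K\bB(r)}.
\end{equation*}
This already proves both implications when the noise is a single $\CN_{r_0}$ with $r_0 < d$.

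For a general convex combination $\CN = \int d\mu(r)\,\CN_r$ supported on $[0,d)$, both the post-projection overlap and $p_{\ok}$ are linear in $\mu$, so
\begin{equation*}
p_{\err}(\CN) \;=\; 1 - \frac{\int d\mu(r)\, \bA(r)/S^{(2N-1)}(r)}{K\int d\mu(r)\, \bB(r)/S^{(2N-1)}(r)}.
\end{equation*}
The pointwise hypothesis $\bA(r) \geq (1-\epsilon)\,K\bB(r)$ on $[0,d)$ integrates against any probability measure $\mu$ to yield $p_{\err}(\CN) \leq \epsilon$; conversely, if the inequality fails at some $r_0 < d$, concentrating $\mu$ near $r_0$ produces a violation of the QEDC condition, completing the equivalence.

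The main obstacle, and the only place where care is required, is pinning down the operational definition of $p_{\err}$ in a way that matches the computation above: namely, it is the \emph{conditional} infidelity of the normalized post-projection state on $RQ$ with the original entangled state $\ketbra{\psi}$, conditioned on observing the outcome $\Pi$. Once this is accepted the rest is bookkeeping, as the passage from a single $\CN_r$ to arbitrary convex combinations reduces to the elementary fact that a pointwise inequality between non-negative weighted quantities survives averaging and is recoverable by suitable concentration.
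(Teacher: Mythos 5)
Your proposal is correct and follows essentially the same route as the paper: you compute the pass probability as $\bB(r)/(K S^{(2N-1)}(r))$ and the conditional fidelity with $\ketbra{\psi}$ as $\bA(r)/(K\bB(r))$ via the maximally entangled state, then handle convex combinations by noting that the ratio of the averaged quantities obeys the pointwise bound and that a pointwise violation is witnessed by (a channel concentrated at) a single $\CN_{r_0}$. The only cosmetic difference is that you work at the level of individual Kraus displacements before averaging, whereas the paper writes $p_1$ and $p_2$ directly for the mixed channel; the content is identical.
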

\begin{proof}
    Consider the channel $\CN = \int_{0}^{d} p(r) \CN_{r} \diff r$ for a probability distribution $p$  with support on $[0, d)$. Let further $\overline{\rho} = \tr_{R} \ketbra{\psi}$ be the maximally mixed state on $\CC$. Then the probability of detecting no error via measurement of the codespace projector is given by $p_{1} = \tr \Pi \CN(\overline{\rho}) = \int_{0}^{d} \frac{p(r) \bB(r)}{K S^{(2N-1)}(r)}\,  \diff r\,$. The probability of no error occurring on the global entangled state is $p_{2} = \bra{\psi} (\mathds{1} \otimes \CN)(\ketbra{\psi}) \ket{\psi} = \int_{0}^{d} \frac{p(r) \bA(r)}{K^{2} S^{(2N-1)}(r)}\, \diff r .$ Out of all cases where no error has been detected, the fraction where the global state has indeed changed is hence 
    \begin{equation}
        p_{\err} = \frac{p_1 - p_2}{p_{1}}= 1 - \frac{\int_{0}^{d} p(r) \bA(r)/ S^{(2N-1)}(r) \diff r}{\int_{0}^{d} p(r) K \bB(r)/ S^{(2N-1)}(r) \diff r} \, .
    \end{equation}
    This term is upper bounded by, and can approach arbitrarily closely, $\textnormal{sup}\{ 1 - \bA(r)/K\bB(r) \mid {r\in [0, d)} \}$
    which does not exceed $\epsilon$ for any distributions $p$ if and only if Condition~(\ref{eq:A_B_frac_exceeds_epsilon}) is satisfied.
\end{proof}

We use this relationship to restrict the existence of possible QEDCs in the next section. In Appendix~\ref{app:finite_energy_GKP} we estimate the code parameters of an approximate single mode GKP code, produced from its idealized infinite energy limit via an envelope operator $E_\Delta = \text{exp}(- \Delta^2 \hat{n})$. We show that an ideal $[[N=1, K, d]]$-GKP code produces approximate QEDCs of parameters $[[N, K, d/2-\delta, \epsilon]]$ with $\epsilon \in \mathcal{O}(\text{exp}(-\frac{d \delta}{8 \Delta^2} ))$ for all $\delta > 0$. This presents a discontinuous drop in distance, as introduced above, at $\epsilon=0$. The reason for this drop is that when employing ideal GKP codes for the above detection task, displacement errors of length below the distance will be detected whenever they occur. For an approximate code, however, if a logical $\ket{\overline{0}}$ state is displaced by more than half of the ideal 
 code's distance, it is closer in phase-space to a logical $\ket{\overline{1}}$ state for some directions. Post-selecting on the outcome where no error was detected, it will then collapse onto $\ket{\overline{1}}$ with probability approaching unity as $\Delta \rightarrow 0$.

\section{Bounds from weight distributions}
\label{sec:programming_bounds}
Let us now derive a bound on the parameters of an $[[N, K, d, \epsilon]]$-QEDC with weight distributions $\bA, \bB$. Suppose $\hat{f}(x) \geq 0$ is a nonzero function on $\R_{\geq 0}$ such that its radial Fourier transform
\begin{equation}
    \label{eq:radial_fourier_trafo_f}
    f(y) = \frac{1}{y^{N-1}}\int_{0}^{\infty} \diff x \, J_{N-1}(xy) x^{N} \hat{f}(x) 
\end{equation}
satisfies $f(y) \geq 0$ for all $y < d$ and $f(y) \leq 0$ for $y \geq d$. Then, we have that
\begin{equation}
\begin{split}
      \int_{0}^{d} \diff x\,  \hat{f}(x) \bA(x)  & 
     \leq \int_{0}^{\infty}\diff x\,  \hat{f}(x) \bA(x) \\ &
     = \int_{0}^{\infty}\diff x\,  \hat{f}(x) x^{N} \int_{0}^{\infty} \diff y\, \frac{J_{N-1}(xy)}{y^{N-1}} \bB(y)\\ &
     = \int_{0}^{\infty} \diff y\, f(y) \bB(y)\\ &
     \leq \int_{0}^{d} \diff y\, f( y) \bB(y) \\ & \leq \frac{1}{K (1 - \epsilon)} \int_{0}^{d}\diff y\,  f(y) \bA(y) \, .
\end{split}
\end{equation}
Here we have employed, in order, Eqs.~(\ref{eq:CVMW_BfromA}), 
(\ref{eq:radial_fourier_trafo_f}) and (\ref{eq:A_B_frac_exceeds_epsilon}). Let us note that, in order to ensure convergence for all of the above expressions, it suffices to assume $f$ and $\hat{f}$ bounded. Rearranging, it follows that 
\begin{equation}
    K \leq \frac{1}{1-\epsilon} \frac{\int_{0}^{d}\diff x\,  f(x) \bA(x) }{\int_{0}^{d} \diff x\,  \hat{f}(x) \bA(x)} \leq \frac{1}{1-\epsilon} \sup\Bigg\{\frac{f(x)}{\hat{f}(x)} \Biggmid x \in [0, d] \Bigg\} \, ,
\end{equation}
and we state this result as our main theorem.

\QuantumCohnElkies*
This quantum version of the Cohn-Elkies bound differs from the classical version mainly through the occurrence of a supremum, an intrinsic quantum feature which occurs also in the discrete variable version of the bound~\cite{ashikhmin_1999_Upper}. 

\subsection{The Levenshtein bound}
We will now derive a version of Levenshtein's classical sphere packing bound \cite{levenshtein_1979_bounds} for quantum codes. The classical bound is the second-best asymptotic bound on the density of sphere packings known, the tightest bound in sufficiently high dimensions being a constant factor improvement on that of Kabatiansky and Levenshtein \cite{kabatiansky_1978_Bounds, cohn_2014_Sphere}. Our approach is based on the application of the above general bound to the Levenshtein function $f: \R^{2N} \rightarrow \R$ given by
\begin{equation}
\label{eq:levenshtein_f_from_ghat}
    f(\mathbf{x}) = \big(1-x^2\big) \hat{g}\big( \mathbf{x} \big)^2
\end{equation}
with 
\begin{equation}
\label{eq:g_definition}
    \hat{g}(\mathbf{x}) = \frac{N!}{1-x^2} \frac{2^{N}}{ j_{N}^{N} x^{N}}J_{N}(j_{N} x) \, .
\end{equation}
Here $j_{N}$ denotes the first positive zero of the Bessel function $J_{N}$.
The function $f$ can be obtained by variational considerations, optimizing the classical Cohn-Elkies bound over functions whose Fourier transforms have compact support. It can be employed to prove the classical Levenshtein sphere packing bound via the Cohn-Elkies bound~\cite{cohn_2003_New}. The functions are normalized and scaled such that $f(0) = \hat{g}(0) = 1$ and such that $f$ achieves its first zero at $x = 1.$ The Fourier transform $g$ is available explicitly  and is given by 
\begin{equation}
    g(\mathbf{x}) = c_{N} \Bigg( 1 - \frac{j_{N}^{N-1}}{J_{N-1}(j_{N})} \frac{J_{N-1}(x)}{x^{N-1}}\Bigg) \chi_{j_{N}} (x) \, ,
\end{equation}
where $c_N = 2^{N} N!/j_{N}^{2N}$ and where $\chi_r$ denotes the indicator function of the radius-$r$ Ball. It is shown in Ref.~\cite{cohn_2003_New} that the Fourier transform of $f$ is given by a convolution of $g$ with the indicator function of a Ball
\begin{equation}
    \hat{f}(\mathbf{x}) = c_{N} (g * \chi_{j_{N}})(\mathbf{x}) \, .
\end{equation}
As $g$ and $\chi_{j_{N}}$ are everywhere non-negative the same holds for $\hat{f}$. Together with Eq.~\eqref{eq:levenshtein_f_from_ghat} this shows that $f$ satisfies the conditions of Theorem~\ref{thm:quantum_cohn_elkies}.

In Appendix~\ref{app:levenshtein_d_proof} we will be occupied with the proof of the following technical Lemma concerning the Levenshtein function.
\begin{lemma}
    \label{lem:levenshtein_supremum}
    For $0 < d \leq d_{+} = \Big( \frac{16 N! |J_{N-1}(j_N)| }{3\sqrt{\pi} \Gamma\big(\frac{2N-1}{2} \big) j_{N}^{N-2}} \Big)^{1/6}$ the Levenshtein function satisfies
    \begin{equation}
        \frac{f(0)}{\hat{f}(0)} = \sup\Bigg\{\frac{f\big(\frac{x}{d}\big)}{\hat{f}(d x)} \Biggmid x \in [0,d] \Bigg\} \, .
    \end{equation}
\end{lemma}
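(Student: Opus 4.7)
The plan is to show that the ratio $h(x) := f(x/d)/\hat{f}(dx)$ on $[0,d]$ attains its supremum at $x=0$. Since $f(0)=\hat{g}(0)=1$, this is equivalent to proving
\begin{equation*}
    \hat{f}(dx) \geq \hat{f}(0) \, f(x/d) \qquad \text{for all } x \in [0,d].
\end{equation*}
The two sides agree at $x=0$, and the right-hand side vanishes at $x=d$ (because $J_N(j_N)=0$ in the definition of $\hat{g}$), so the task is to rule out an intermediate maximum of $h$ under the hypothesis on $d$.

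The first step is to evaluate the quantities entering the inequality explicitly. I would compute $\hat{f}(0)=c_N\int_{B(0,j_N)} g(\mathbf{y})\, d\mathbf{y}$ using radial integration and the closed Bessel expression for $g$; standard identities for $\int r^k J_{N-1}(r)\, dr$ together with the fact that $g$ vanishes at the sphere of radius $j_N$ produce the factor $|J_{N-1}(j_N)|$ that appears in $d_+$, and radial integration in dimension $2N$ produces the factor $\sqrt{\pi}\,\Gamma((2N-1)/2)$ via Legendre's duplication formula. In parallel, I would compute the radial Taylor expansions $f(y) = 1 + \tfrac{1}{2} f''(0) y^2 + O(y^4)$ and $\hat{f}(y) = \hat{f}(0) + \tfrac{1}{2}\hat{f}''(0)y^2 + O(y^4)$. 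The coefficient $f''(0)$ follows from the series $J_N(j_N y)/(j_N y)^N = (2^N N!)^{-1}\bigl(1 - j_N^2 y^2/(4(N+1)) + \ldots\bigr)$, while $\hat{f}''(0)$ follows either by differentiating the convolution or via the moment identity relating $\hat{f}''(0)$ to $\int |\mathbf{x}|^2 f(\mathbf{x})\, d\mathbf{x}$.

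The second step is to promote these local expansions into the global inequality on $[0,d]$. Substituting the quadratic parts into the target inequality yields a condition on $d$ of the form $d^4 \leq C_1$; to handle the interior of the interval I would control the $O(x^4)$ remainders uniformly. For $f$ the remainder is directly bounded from its closed Bessel form, whereas for $\hat{f}$ the higher derivatives are expressed through higher radial moments of $f$, which themselves admit clean Bessel evaluations. Collecting the constants and simplifying the resulting Gamma quotients should reproduce exactly $d_+^6 = 16 N! |J_{N-1}(j_N)|/(3\sqrt{\pi}\,\Gamma((2N-1)/2)\,j_N^{N-2})$.

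The main obstacle is the uniform control of $\hat{f}$ on the full interval $[0, d^2]$: unlike $f$, it is defined only implicitly as the convolution $c_N(g \ast \chi_{j_N})$, with no elementary closed form, so its variation must be estimated through moments of $f$ rather than directly. Turning the leading-order Taylor bound into a uniform inequality appears to be the technical crux, and is presumably responsible for the exponent $1/6$ in $d_+$ rather than the $1/4$ one would naively expect from pure quadratic analysis — i.e., the global estimate is more restrictive than the local one, and it is the global estimate that selects the stated threshold.
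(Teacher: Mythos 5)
Your reduction --- proving $\hat f(dx)\ge \hat f(0)\,f(x/d)$ on $[0,d]$ --- is the same starting point as the paper's, but your plan has a genuine gap at exactly the step you yourself flag as the crux. You propose quadratic Taylor expansions of $f$ and $\hat f$ at the origin and then ``uniform control of the $O(x^4)$ remainders,'' with the remainders of $\hat f$ to be handled via higher radial moments of $f$. This mechanism does not work: $f(\mathbf{x})=(1-x^2)\hat g(\mathbf{x})^2$ decays only like $\lVert\mathbf{x}\rVert^{-2N-3}$ up to oscillation, so the fourth radial moment $\int \lVert\mathbf{x}\rVert^{4}\lvert f(\mathbf{x})\rvert\,\diff\mathbf{x}$ diverges and higher derivatives of $\hat f$ at $0$ cannot be obtained or bounded this way; equivalently, $\hat f=c_N\,g*\chi_{j_N}$ is compactly supported, hence non-analytic, and no germ at the origin controls it on all of $[0,d^2]$. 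Moreover, as you note, the purely quadratic comparison produces a condition of the form $d^4\le C_1$, i.e.\ an exponent $1/4$, whereas the stated $d_+$ has exponent $1/6$; so the local analysis cannot select the claimed threshold, and your proposal supplies no concrete substitute for the global estimate --- which is precisely the content of the lemma.

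The paper's proof avoids Taylor expansion altogether. For the upper bound it truncates the product formula $J_N(x)/x^{N}=\tfrac{1}{2^N N!}\prod_{k\ge1}\big(1-x^2/j_{N,k}^2\big)$ to get the global inequality $f(x/d)\le 1-x^2/d^2$ on $[0,d]$, exact at both endpoints. For the lower bound it uses the convolution representation: $(g*\chi_{j_N})(y)=1-\int_{B_{j_N}(0)\setminus B_{j_N}(y\hat e_1)} g$, and the key observation that $g$ vanishes to \emph{second} order at $\lVert\mathbf{x}\rVert=j_N$ (since $\partial_x\big(J_{N-1}(x)/x^{N-1}\big)=-x\,J_N(x)/x^{N}$ vanishes at $j_N$), so $g(r)\le \tfrac12 C(r-j_N)^2$ there; integrating this over the crescent, which has width $y$ in the $\hat e_1$ direction, gives a deficit of order $C y^3$. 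Comparing $1-\mathrm{const}\cdot C\,(dx)^3$ with $1-x^2/d^2$ at $x=d$ yields the $d^6$ condition, hence the exponent $1/6$, with $C$ itself estimated from the same product formula (this is where $\lvert J_{N-1}(j_N)\rvert$ enters $d_+$). That cubic-in-$y$ convolution-geometry bound is the idea missing from your plan; without it, or some equivalent global estimate on $\hat f$, the stated value of $d_+$ is not reached.
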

We employ this Lemma to prove the following quantum version of the Levenshtein sphere packing bound.
\QuantumLevenshtein*

\begin{proof}
    For $d \in \R_{>0}$ the scaled Levenshtein function $f(x/d)$ and its Fourier transform $d^{2N} \hat{f}(d x)$ satisfy the conditions of Theorem~\ref{thm:quantum_cohn_elkies}. Applying the theorem, we have
    \begin{equation}
    \begin{split}  
    K &\leq \frac{1}{d^{2N}(1-\epsilon)} \sup\Bigg\{\frac{f\big(\frac{x}{d}\big)}{ \hat{f}(d x)} \Biggmid x \in [0,d] \Bigg\} \, \\
    & = \frac{1}{d^{2N}(1-\epsilon)} \frac{f(0)}{\hat{f}(0)}\, .
    \end{split}
    \end{equation}
   The equality is a consequence of Lemma~\ref{lem:levenshtein_supremum} and the result follows from the equality $f(0)/\hat{f}(0) = 1/c_{N} = j_{N}^{2N} / (N! 2^{N})$.
\end{proof}
Note that the appearance of an upper bound $d_{+}$ on the applicability of the theorem is necessary as otherwise the bound would prohibit the existence of codes of any size $K$ once the distance $d$ is taken large enough. As any pure state is a $K = 1$ code of infinite distance, however, this leads to a contradiction. The value of $d_{+}$ appearing in Theorem~\ref{thm:quantum_levenshtein} is a consequence of Lemma~\ref{lem:levenshtein_supremum} and can likely be improved substantially. 

Importantly, an improvement to $d_{+} > j_{N}/ (\sqrt{2} (N!2)^{(1/2N)})$ would imply that, for sufficiently low $\epsilon$, the distance of any $K \geq 2$ code cannot exceed $j_{N}/ (\sqrt{2} ( N!K) ^{(1/2N)})$. This distance upper bound scales as $\mathcal{O}(\sqrt{N})$ as a consequence of the asymptotic expression $j_{N} \sim N + 1.8557571\, N^{1/3} + \mathcal{O}(N^{-1/3})$ for the smallest positive Bessel zero~\cite{conway_1999_Sphere}.

\subsection{Optimality of $E_8$ and Leech GKP codes}

\begin{figure}
     \centering
     \includegraphics[width=\textwidth]{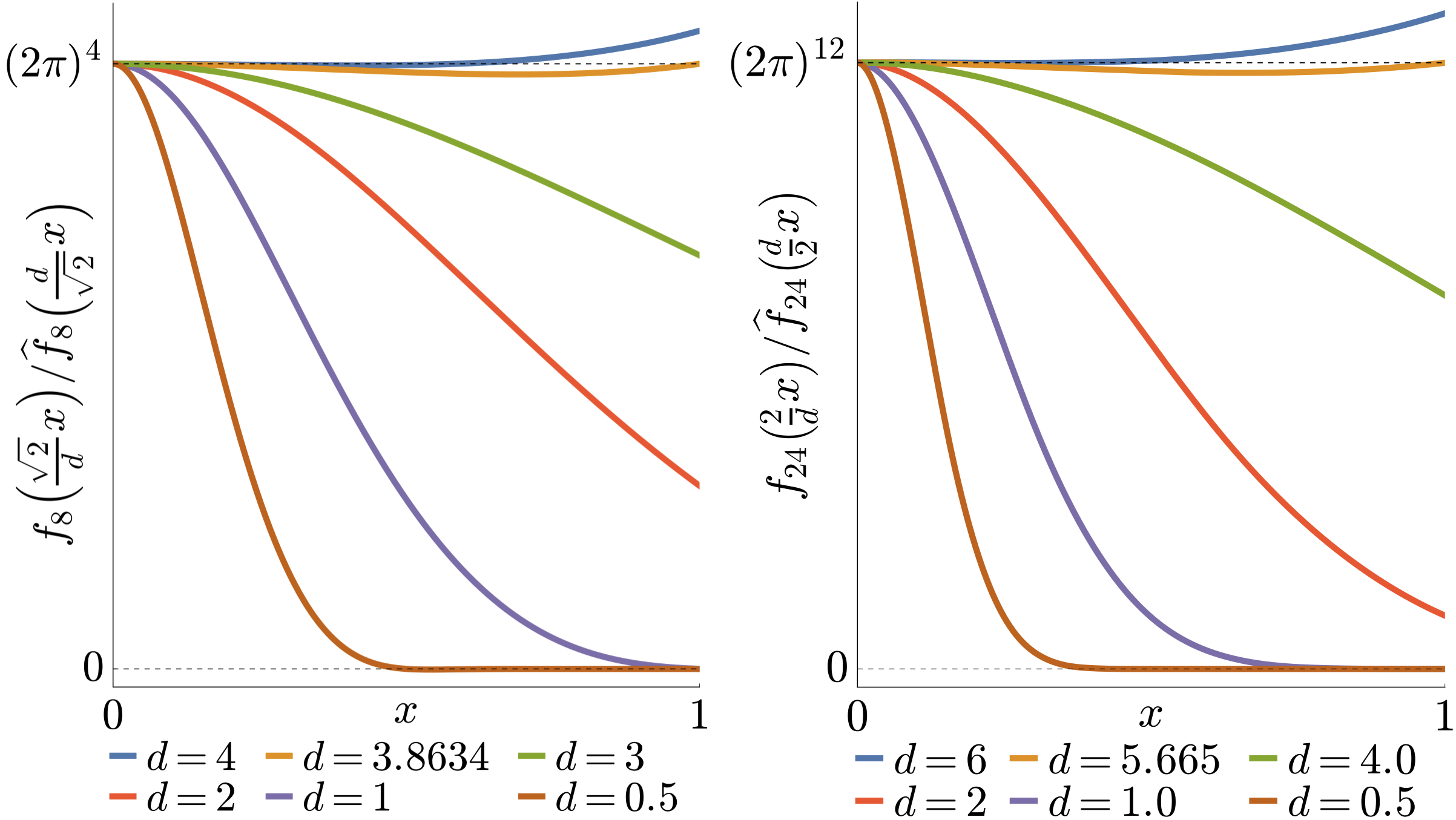}
     \caption{Left: Plot of the quotient $f_8(\sqrt{2}x ) / \hat{f}_8(d^2 x /\sqrt{2})$ over the unit interval for various values of $d$. The quotient visibly achieves a maximum value of $(2\pi)^4$ as long as $d \leq d_{8}^{\textnormal{(max)}} \approx \dEcrit$. For higher values of $d$ the quotient achieves values exceeding $(2\pi)^4$. Right: Plot of the quotient $f_{24}(2 x) / \hat{f}_{24}(d^2 x /2)$ over the unit interval for various values of $d$. The quotient visibly achieves a maximum value of $(2\pi)^{12}$ as long as $d \leq d_{24}^{\textnormal{(max)}} \approx \dLcrit$. For higher values of $d$ the quotient achieves values exceeding $(2\pi)^{12}$.}
    \label{fig:f8_quotient}
\end{figure}

Solutions to the classical sphere packing problem have, until recently, only been known in dimensions $1,2$ and $3$. In 2017 the sphere packing problem in dimension $8$ was solved by Viazovska, leading also to a subsequent solution in dimension $24$ \cite{viazovska_2017_sphere, cohn_2017_sphere}. In both dimensions the problem was solved by application of the Cohn-Elkies bound to specially constructed ‘magic’ functions $f_8$ and $f_{24}$ and establishes the $E_8$ and Leech lattices as sphere packings of maximal density. The magic functions satisfy $f_{8}(0) = f_{24}(0) = 1$, while their non-negative Fourier transforms have $\hat{f}_{8}(0) = 1/(2\pi)^4$ and $\hat{f}_{24}(0) = 1/(2\pi)^{12}$. The values $f_{8}$(x) and $f_{24}$(x) vanish exactly at arguments which are achieved as lengths of vectors within the respective lattice. That is, $f_8$ vanishes exactly whenever $x = \sqrt{2k}$ for an integer $k > 0$, the set of lengths of vectors contained in the $E_8$ lattice. At the same time $f_{24}$ vanishes for all $x = \sqrt{2k}$ with integer $k >1$,  the set of lengths of vectors contained in the Leech lattice. Both functions satisfy the properties required by Theorem~\ref{thm:quantum_cohn_elkies}. For detailed constructions of both functions see Refs.~\cite{viazovska_2017_sphere} and~\cite{cohn_2017_sphere}. Note, however, the different convention for the Fourier transformation employed in these references.

In this section, we argue that the ideal GKP codes based on the $E_8$ and Leech lattices achieve optimal distances. Our argument is based on the application of the functions $f_8$ and $f_{24}$ to Theorem~\ref{thm:quantum_cohn_elkies}. Crucially, our proof rests on the following assumptions, equivalents of Lemma~\ref{lem:levenshtein_supremum}, for both functions. 
\begin{assumption}
\label{assumption:f8_sup}
For $0 < d \leq d_{8}^{\textnormal{(max)}} \approx \dEcrit$  the $E_8$ magic function satisfies the equality 
    \begin{equation}
    \label{eq:assumption_f8}
        \textnormal{sup}\Bigg\{ \frac{f_{8}\big(\sqrt{2} x\big)}{\hat{f}_8\big( \frac{d^2}{\sqrt{2} }x\big)}\Biggmid x \in [0, 1] \Bigg\} = \frac{f_8(0)}{\hat{f}_{8}(0)} = (2 \pi)^4\, .
    \end{equation}
\end{assumption}
\begin{assumption}
\label{assumption:f24_sup}
For $0 < d \leq d_{24}^{\textnormal{(max)}} \approx \dLcrit$ the Leech magic function satisfies the equality
    \begin{equation}
    \label{eq:assumption_f24}
        \textnormal{sup}\Bigg\{ \frac{f_{24}\big(2 x\big)}{\hat{f}_{24}\big( \frac{d^2}{2}x\big)}\Biggmid x \in [0, 1] \Bigg\} = \frac{f_{24}(0)}{\hat{f}_{24}(0)} = (2 \pi)^{12}\, .
    \end{equation}
\end{assumption}
While we do not provide proofs of the equalities~(\ref{eq:assumption_f8}) and~(\ref{eq:assumption_f24}), a plot of the respective quotients, which we provide in  Fig.~\ref{fig:f8_quotient}, indicates that their suprema are achieved as assumed. 

\QuantumEEightLeechOptimality*
\begin{proof}
    By applying Theorem~\ref{thm:quantum_cohn_elkies} to the function $f(x) = f_{8}(\sqrt{2}x/d)$ with Fourier transform $\hat{f}(x) = d^8 \hat{f}_{8}(d x/\sqrt{2}) /16$ we obtain
    \begin{equation}
    \begin{split}
        K d^{8} & \leq \frac{16}{1-\epsilon} \sup\Bigg\{\frac{f_{8}\big(\frac{\sqrt{2} }{d } x \big)}{\hat{f}_{8}(\frac{d}{\sqrt{2}} x)} \Biggmid x \in [0,d] \Bigg\} \\ 
        & = \frac{16}{1-\epsilon} \sup\Bigg\{\frac{f_{8}\big(\sqrt{2}  x \big)}{\hat{f}_{8}(\frac{d^2}{\sqrt{2}} x)} \Biggmid x \in [0, 1] \Bigg\} \\
        & = (2\pi)^4\frac{16}{1-\epsilon} .
    \end{split}
    \end{equation}
    The second equality is a consequence of Assumption~\ref{assumption:f8_sup}. The second part of the theorem follows in an identical way by application of Theorem~\ref{thm:quantum_cohn_elkies} to the function $f(x) = f_{24}(2x/d)$ with Fourier transform $\hat{f}(x) = d^{24} \hat{f}_{24}(d x/2)/2^{24}$ in combination with Assumption~\ref{assumption:f24_sup}.
\end{proof}

In particular, this Theorem implies that no ideal $K \geq 2$ code on $N=4$ modes can achieve a distance exceeding $2^{7/8} \sqrt{\pi} \approx 3.251$, the distance of an ideal ($\epsilon = 0$) GKP quantum error correction code based on the $E_8$ lattice. As this is still within the range covered by the theorem, any code with larger distance must be necessarily approximate with $\epsilon > 0$. Similarly, the applicability upper bound $d_{24}^{(\textnormal{max})}$ covers the whole distance range available to ideal  error correcting codes with $N=12$, showing that no code can exceed the distance provided by an ideal Leech-lattice based GKP code. While it was previously known that ideal $E_8$ and Leech GKP codes are optimal among GKP codes~\cite{brady_2024_Advances, cohn_2009_Optimality}, our results establish that no physical code construction can achieve higher distances, even if not lattice-based. As we have seen in Section~\ref{sec:approximate}, however, the distances of approximate GKP codes are not continuous at $\epsilon = 0$. It follows that our bounds are not tight for approximate GKP codes, and it is an interesting question whether better approximate constructions can be found.

\section{Conclusion}
\label{sec:conclusion}
We have introduced weight distribution and the corresponding MacWilliams identities for operators on continuous-variable quantum systems. From these distributions we have derived a bound on general quantum error correction codes protecting against displacement noise. The bound is analogous to the classical Cohn-Elkies sphere packing bound and as such our work extends the set of previously available quantum coding bounds based on linear programming~\cite{shor_1997_Quantum, ashikhmin_1999_Upper, rains_1999_Quantum, ouyang_2022_Linear, okada_2023_Quantum}.

From the general bound have derived a quantum version of the classical Levenshtein bound on sphere packing densities, which gives a concrete upper limit on code sizes in terms of mode number as well as distance. Moreover, we have shown that the distances achieved by ideal GKP codes based on the $E_8$ and Leech lattices cannot be exceeded by any physical construction, even if not lattice based. 

It is an interesting question for future research to more fully understand the relationship between the classical sphere packing problem and continuous-variable quantum error correcting codes against the displacement channel. This could include the design of codes not based on lattices or the derivation of multi-point distance bounds in terms of semi-definite programming such as available in the classical and discrete variable quantum cases~\cite{bachoc_2010_Semidefinite, schrijver_2005_New, munne_2024_SDP, cohn_2022_Threepoint}. It is also an interesting question whether a separation in packing densities exists between classical sphere packings and their quantum analogues investigated here.

Given that discrete variable weight distributions have proven useful in studying a diverse set of topics in quantum information theory—including quantum error correction~\cite{shor_1997_Quantum,rains_1999_Quantum, ashikhmin_2000_Quantum, ashikhmin_2014_Fidelity}, magic state distillation~\cite{rall_2017_Signed, kalra_2025_Invariant}, absolutely maximally entangled states~\cite{huber_2018_Bounds, scott_2004_Multipartite}, and the robustness of entanglement to noise~\cite{miller_2023_Shor, miller_2024_Experimental}—we believe that our work provides a set of tools that will aid in investigating a similar set of problems in the continuous-variable context.

\section*{Acknowledgments} 
I am most grateful to Alexander Barg for many helpful discussions and for pointing me to many relevant points in the literature, to Jens Eisert for providing the stimulating research environment that made this project possible, and to Victor Albert for hosting me on a research stay at the University of Maryland, which contributed to the progress of this project. Moreover, I would like to thank Alexander Barg, Jens Eisert, Victor Albert, Francesco Arzani, Jonathan Conrad, Philippe Faist, Johannes Frank, Alex Townsend-Teague and Peter-Jan Derks for stimulating discussion and valuable feedback on drafts of the manuscript.
I acknowledge support from the BMBF (RealistiQ, MUNIQC-Atoms, PhoQuant, QPIC-1, and QSolid), the DFG (CRC 183, project B04, on entangled states of matter), the Munich Quantum Valley (K-8), as well as the Einstein Research Unit on quantum devices.

\printbibliography
%\bibliography{bibliography.bib}
\appendix

\section{A Bessel function identity}
\label{app:Bessel_identity}
We derive the Bessel identity~\eqref{eq:Bessel_zonal_spherical_function} by computing the Fourier transform of a certain distribution in two distinct ways. Specifically, let $f$ be the Fourier transform of a radius-$r$ spherical surface in $\R^{2N}$, that is the distribution which in radial coordinated can be expressed as $f(\boldsymbol{x}) = \delta(\norm{\boldsymbol{x}} - r)$. 
By Eq.~\eqref{eq:radial_fourier_transform_background} we can obtain the radial part of its Fourier transform explicitly as
\begin{equation}
    \hat{f}(y) = \frac{J_{N-1}(yr)r^N }{y^{N-1}}\, .
\end{equation}
Directly computing the Fourier transform in $2N$-dimensional space instead via Eq.~\eqref{eq:fourier_definition} yields
\begin{equation}
\begin{split}
    \hat{f}(\boldsymbol{y}) & = \frac{1}{(2\pi)^N} \int_{0}^{\infty} \diff x \int_{\norm{\boldsymbol{x}} = x} \diff \boldsymbol{x} \,  e^{- i \boldsymbol{y}^T \boldsymbol{x}}\delta(x-r)\\
    & = \frac{1}{(2\pi)^N} \int_{\norm{\boldsymbol{x}} = r} \diff \boldsymbol{x} \, e^{- i \boldsymbol{y}^T \Omega \boldsymbol{x}}
\end{split}
\end{equation}
where in the second line we have exploited that the symplectic form $\Omega$ is an orthogonal map. It then follows that
\begin{equation}
    J_{N-1}(\norm{\boldsymbol{y}} r) = \frac{\norm{\boldsymbol{y}}^{N-1} \hat{f}(y)}{r^N} = \frac{\norm{\boldsymbol{y}}^{N-1}}{(2\pi r)^N} \int_{\norm{\boldsymbol{x}} = r} \diff \boldsymbol{x} \, e^{- i \boldsymbol{y}^T \Omega \boldsymbol{x}}\, ,
\end{equation}
which is the desired identity.

\section{Integral over dual distribution}
\label{app:Integral_B}
Let us recall the fact that the 2-mode $\swap$ gate, defined by $\swap \ket{\psi}\otimes\ket{\varphi} = \ket{\varphi}\otimes \ket{\psi}$, satisfies the relationship $\tr \swap (A \otimes B) = \tr A B$ as well as that its decomposition into displacement operators is $\swap = \frac{1}{(2\pi)^{N}} \int \diff\bxi \, D_{\bxi} \otimes D_{\bxi}^{\dagger}$. Let now $\hat{O}$ be any trace-class operator, then
\begin{equation}
\begin{split}
    \hat{O} \otimes \id  & = \swap (\id \otimes \hat{O} ) \swap\\ & = \frac{1}{(2\pi)^{2N}} \int \diff\bxi d\boeta\,  (D_{\bxi} \otimes D_{\bxi}^{\dagger} ) (\id \otimes \hat{O})(D_{\boeta}^{\dagger} \otimes D_{\boeta})\, .
\end{split}
\end{equation}
Tracing over the first subsystem we obtain
\begin{equation}
\begin{split}
    \trb{\hat{O}} \, \id & = \frac{1}{(2\pi)^{2N}} \int \diff\bxi \diff \boeta\,  \trb{D_{\bxi} D_{\boeta}^{\dagger}} \, D_{\bxi}^{\dagger} \hat{O} D_{\bmu} \\ & = \frac{1}{(2\pi)^{N}} \int \diff\bxi\,  D_{\bxi}^{\dagger} \hat{O} D_{\bxi} \, .
\end{split}
\end{equation}
Here we have used the orthogonality of displacement operators. As a consequence we obtain the integral identity 
\begin{equation}
    \int \diff r \,  \bB(r)  = \int \diff \bxi \tr \big( D_{\bxi} \hat{O}_{1} D_{\bxi}^{\dagger} \hat{O}_{2}^{\dagger} \big)  
     = (2 \pi)^{N} \trb{\hat{O}_{1}} \trb{\hat{O}_{2}^{\dagger}}\, .
\end{equation}

\section{Finite energy GKP code parameters}
\label{app:finite_energy_GKP}
In this section we consider the code parameters of approximate, finite energy GKP states. For simplicity we will focus on single mode codes ($N=1$) encoding a single qubit ($K=2$). The projectors of ideal GKP codes can be formally expressed as
\begin{equation}
\label{eq:gkp_ideal_projector}
    \Pi = \sum_{\bxi \in \CL} e^{i \phi(\bxi)} D(\bxi) \, ,
\end{equation}
where $\CL$ denotes the symplectic lattice underlying the GKP code and the phases $\phi$ constitute a function on the lattice~\cite{conrad_2022_GottesmanKitaevPreskill}.
By displacing the code, i.e.\ $\Pi \mapsto D_{\bxi} \Pi D_{\bxi}^\dagger$, the phases $\phi(\bxi)$ can always be taken to vanish in the case of even lattices, which includes the case of a qubit encoded in a single mode~\cite{burchards_2024_Fiber}. As such a displacement does not affect the weight distributions of a code, as shown in Section~\ref{sec:AB_euclidean_invariance_properties}, or its error correcting properties, we will henceforth assume that $\phi = 0$.

To obtain a physical, finite energy version of these codes, we introduce the hermitian envelope operator $E_{\Delta} = e^{- \Delta^2 \hat{n}}$, with envelope-width parameter $\Delta > 0$, which exponentially suppresses high occupation number states. We define an approximate GKP code to be the projector onto the image of the trace class operator $E_{\Delta} \Pi E_{\Delta}$. As $\Delta \rightarrow  0$, the envelope operator approaches the identity map, and the code approaches its idealized form. 

It is known that the envelope operator has the following expansion in terms of displacement operators~\cite{royer_2020_Stabilization, matsuura_2020_Equivalence}
\begin{equation}
    E_{\Delta} = \frac{1}{2 \pi (1-e^{-\Delta^2})}\int \diff \bxi \, D_{\bxi}\, \text{exp}\Bigg\{-\frac{\norm{\bxi}^2}{4 \tanh (\Delta^2/2)}\Bigg\}  = \int \diff \bxi\, E(\bxi) D_{\bxi}\, ,
\end{equation}
where we have denoted by $E(\bxi) = \chi_{E_{\Delta}}(\bxi)/ 2\pi$ the characteristic function of the envelope operator.
Let us first see how the envelope operator transforms the characteristic function $\chi_{D_{\bsigma}^{\phantom{\dagger}}}(\bxi) \propto \delta(\bxi - \bsigma)$ of a single displacement operator.
\begin{equation}
\begin{split}
\label{eq:enveloped_displacement}
    & \chi_{E_\Delta D(\sigma)  E_\Delta^{\phantom{\dagger}}}(\bxi)  = \tr D^\dagger(\bxi) E_{\Delta} D(\bsigma) E_{\Delta} \\
    & = \int  \diff \boeta \diff \bmu \, E(\boeta) E(\bmu) \tr D(\bmu) D^\dagger(\bxi)  D(\boeta) D(\bsigma)  \\
    & = 2 \pi \int \diff \boeta \diff \bmu \, E(\boeta) E(\bmu)\, \text{exp}\bigg\{ \frac{i}{2} \Big(\bxi^T \Omega (\boeta - \bmu) + \boeta^T \Omega \bmu \Big)\bigg\} \delta(\bmu + \boeta - \bxi - \bsigma) \\
    & \propto \text{exp} \bigg\{ -\frac{1}{8} \Big( \frac{1}{T}\norm{\bxi - \bsigma}^2 - T \norm{\bxi + \bsigma}^2\Big) \bigg\} \, ,
\end{split}
\end{equation}
where the last step is simply the evaluation of a Gaussian integral and where we have introduced the shorthand $T = \tanh(\Delta^2/2)$. We can see that the envelope operator transforms the delta function $\delta(\bxi - \bsigma)$ in two distinct ways. The first term in the exponential is a transformation to a Gaussian of width of order $\sqrt{T}$ centered around $\bsigma$. As the first term is significant only when $\bxi \approx \bsigma$ the second term effects a total envelope of width of order $1/\sqrt{T}$ around the origin. In phase-space, the envelope operator for small $\Delta$ is hence approximately a combination of convolution with a narrow Gaussian and multiplication with by broad envelope. 

Let us now denote the ideal GKP codes logical displacement Pauli operators by  $\overline{Z}' = D(\bsigma_{1}^\perp)$ and $\overline{X}' = D(\bsigma_{2}^\perp)$ where we take $\bsigma_1^{\perp}, \bsigma_2^{\perp}$ to be the shortest such vectors in $\CLp$ and where w.l.o.g.\ $\bsigma_1^{\perp,T} \Omega \bsigma_2^\perp = 1$. The vectors $\bsigma_{1}^\perp$ and $\bsigma_{2}^\perp$ form a basis of $\CLp$ while the vectors $\bsigma_1 = 2\bsigma_2^\perp$ and $\bsigma_2 = 2\bsigma_1^\perp$ form a basis of $\CL$. Let us further denote by $\lambda_1^\perp$ the norm of shortest nonzero vector in $\CLp$. We can now compute the characteristic functions of the enveloped logical $\ket{\overline{0}}$ and $\ket{\overline{1}}$ states of the code. The non-enveloped states are themselves ideal GKP codes and their corresponding formal projectors  can be obtained by adjoining $\pm\overline{Z}'$ to the GKP stabilizer~\eqref{eq:gkp_ideal_projector}
\begin{equation}
\begin{split}
    \ketbra{\overline{0}}& = \sum_{a,b\in \Z} (-1)^{ab} D(a\bsigma_{1}^\perp + 2b \bsigma_{2}^\perp)\\
    \ketbra{\overline{1}} &= \sum_{a,b\in \Z} (-1)^{a(b+1)} D(a\bsigma_{1}^\perp + 2b \bsigma_{2}^\perp)\, .
\end{split}
\end{equation}
Combining with Eq.\eqref{eq:enveloped_displacement} the characteristic function of $E_{\Delta}\ket{\overline{0}}$ is then
\begin{equation}
    \chi_{E_\Delta  \ketbra{\overline{0}} E_\Delta}(\bxi) \propto \sum_{a,b \in \Z} (-1)^{ab}\text {exp} \bigg\{ -\frac{1}{8} \Big( \frac{1}{T}\norm{\bxi - \bsigma_{a,2b}^\perp}^2 - T \norm{\bxi + \bsigma_{a,2b}^\perp}^2\Big) \bigg\}\, ,
\end{equation}
where $\bsigma_{a,b}^\perp = a\sigma_{1}^\perp + b \sigma_{2}^\perp$. Consider now the region where $\norm{\bxi} \leq \lambda_1^\perp/2 - \delta$ for some $\delta > 0.$ If the envelope parameter $\Delta$ is sufficiently small, contributions to the characteristic function in this region will then be dominated by the sum term with $a = b = 0$, of order $\mathcal{O}(\text{exp}\big\{ -  (\lambda_1^\perp/2 - \delta)^2/(16 \Delta^2) \big\})$. For $(a,b)\neq (0,0)$ one has $\norm{\bxi - \bsigma_{a,b}^\perp} \geq \norm{\bsigma_{a,b}^\perp} - \norm{\bxi} \geq \lambda_{1}^\perp - (\lambda_{1}^\perp/2 - \delta) = \lambda_{1}^\perp/2 + \delta$. Hence contributions from other terms are of order $\mathcal{O}(\text{exp}\big\{ - (\lambda_1^\perp/2 + \delta)^2/(16 \Delta^2) \big\})$ and arbitrarily strongly suppressed relative to the $a = b = 0$ contribution as $\Delta \rightarrow 0$. Concretely then for $\norm{\bxi}^2 \leq \lambda_1^\perp /2 - \delta$ we have
\begin{equation}
\label{eq:enveloped_00_asymptotics}
    \chi_{E_\Delta  \ketbra{\overline{0}} E_\Delta}(\bxi) \propto \text{exp} \bigg\{ -\frac{1}{8} \Big( \frac{1}{T} - T \Big)\norm{\bxi}^2 \bigg\} + \mathcal{O}\bigg(\text{exp}\Big\{ - \frac{1}{16 \Delta^2} \Big(\frac{\lambda_1^\perp}{2} + \delta\Big)^2 \Big\}\bigg)
\end{equation}
as well as an identical result for the enveloped $\ket{\overline{1}}$ logical state
\begin{equation}
\label{eq:enveloped_11_asymptotics}
    \chi_{E_\Delta  \ketbra{\overline{1}} E_\Delta}(\bxi) \propto \text{exp} \bigg\{ -\frac{1}{8} \Big( \frac{1}{T} - T \Big)\norm{\bxi}^2 \bigg\} + \mathcal{O}\bigg(\text{exp}\Big\{ - \frac{1}{16 \Delta^2} \Big(\frac{\lambda_1^\perp}{2} + \delta\Big)^2 \Big\}\bigg)\, .
\end{equation}
Consider now the restrictions of the logical $X$ and $Y$ Pauli operators to the code space. These can be expressed as $\overline{X} \equiv \Pi \overline{X}' \Pi = \Pi \overline{X}'$ and equivalently for $\overline{Y}$. Expansions of these operators in terms of displacements are
\begin{equation}
\label{eq:X_logical_displacement_expansion}
    \overline{X}  =  \sum_{a,b \in \Z} (-1)^b D(\bsigma_{2a, 2b+1}^\perp)
\end{equation}
and
\begin{equation}
\label{eq:Y_logical_displacement_expansion}
    \overline{Y}  =  i\sum_{a,b \in \Z} (-1)^{a-b} D(\bsigma_{2a+1, 2b+1}^\perp) \, .
\end{equation}
Let us now consider the enveloped versions of the off-diagonal logical matrix elements $\ket{\overline{0}}\bra{\overline{1}} = 1/2 (\overline{X} - i \overline{Y})$ and $\ket{\overline{1}}\bra{\overline{0}} =  1/2 (\overline{X} + i \overline{Y})$. As the idealized expressions~(\ref{eq:X_logical_displacement_expansion},\ref{eq:Y_logical_displacement_expansion}) do not contain vectors shorter than $\lambda_1^{\perp}$, characteristic functions of enveloped operators are of order $\mathcal{O}(\text{exp}\big\{ - (\lambda_1^\perp/2 + \delta)^2/(16 \Delta^2) \big\})$ within the region $\norm{\bxi} \leq \lambda_1^\perp/2 - \delta$. Concretely, 
\begin{equation}
\label{eq:enveloped_10_asymptotics}
    \chi_{E_\Delta  \ket{\overline{1}}\bra{\overline{0}} E_\Delta}(\bxi) \in \mathcal{O}\bigg(\text{exp}\Big\{ - \frac{1}{16 \Delta^2} \Big(\frac{\lambda_1^\perp}{2} + \delta\Big)^2 \Big\}\bigg)\, 
\end{equation}
and 
\begin{equation}
\label{eq:enveloped_01_asymptotics}
    \chi_{E_\Delta  \ket{\overline{0}}\bra{\overline{1}} E_\Delta}(\bxi) \in \mathcal{O}\bigg(\text{exp}\Big\{ - \frac{1}{16 \Delta^2} \Big(\frac{\lambda_1^\perp}{2} + \delta\Big)^2 \Big\}\bigg)\, .
\end{equation}
As defined above, the finite-energy GKP code is $\CC = \textnormal{span}\big\{ E_{\Delta} \ket{\overline{0}}, E_{\Delta} \ket{\overline{1}}\big\}$. We can obtain an orthonormal basis $\{\ket{\overline{0}}_{\Delta}, \ket{\overline{1}}_{\Delta}\}$ for this codespace by the Gram-Schmidt procedure. First, we define the intermediate states
\begin{equation}
        \ket{\overline{0}; \Delta}  = \frac{1}{\mathcal{N}_0} E_{\Delta} \ket{\overline{0}} \quad \text{ and }\quad
        \ket{\tilde{1}; \Delta}  = \frac{1}{\mathcal{N}_1}  E_{\Delta} \ket{\overline{1}}
\end{equation}
with $\mathcal{N}_{0}$ and $\mathcal{N}_{1}$ normalization constants. These allow us to express the physical orthonormal basis for the code space as
\begin{equation}
    \ket{\overline{0}}_{\Delta}  = \ket{\overline{0}; \Delta} \quad \text{ and } \quad
    \ket{\overline{1}}_{\Delta}  = \frac{1}{\mathcal{N'}} \Big( \ket{\tilde{1};\Delta} - \langle\overline{0}; \Delta \ket{\tilde{1}; \Delta} \ket{\overline{0}; \Delta} \Big)
\end{equation}
with $\mathcal{N'}$ another normalization constant.
The GKP projector can then be expressed as
\begin{equation}
\label{eq:Pi_GKP_approx}
    \Pi_{\CC} = \ketbra{0; \Delta} + \ketbra{1;\Delta} = \frac{1}{\mathcal{N}_1} E_{\Delta} \Pi E_{\Delta} + \mathcal{D}
\end{equation}
where the correction term is
\begin{equation}
\label{eq:Pi_GKP_difference}
    \CD = c_{1} \ket{0;\Delta}\! \bra{1;\Delta} + c_{1}^{*} \ket{1;\Delta}\! \bra{0;\Delta} + d_1 \ket{0;\Delta}\! \bra{0;\Delta} + d_2 \ket{1;\Delta}\! \bra{1;\Delta}
\end{equation}
with coefficients of order
\begin{equation}
\label{eq:Pi_GKP_difference_order}
    c_{1,2}  \in \mathcal{O}\Big(\text{exp}\Big\{ - \frac{(\lambda_{1}^\perp)^2}{16\Delta^2} \Big\}\Big) \quad \text{ and }\quad d_{1,2}  \in \mathcal{O}\Big(\text{exp}\Big\{ - \frac{(\lambda_{1}^\perp)^2}{8\Delta^2} \Big\}\Big)\, .
\end{equation}
 Let us express the integrands of $\bA$ and $\bB$ directly in terms of matrix elements as follows:
\begin{align*}
\label{eq:A_integrand_enveloped_matrix_elements}
    \bA(r) =   \int_{\norm{\bxi} = r} \diff \bxi \, \bigg( & \bra{0; \Delta} D_{\bxi}^\dagger \ket{0;\Delta}\bra{0; \Delta} D_{\bxi}^\dagger \ket{0;\Delta} \\
      & + \bra{0; \Delta} D_{\bxi}^\dagger \ket{0;\Delta}\bra{1; \Delta} D_{\bxi}^\dagger \ket{1;\Delta}\\
     & + \bra{1; \Delta} D_{\bxi}^\dagger \ket{1;\Delta}\bra{0; \Delta} D_{\bxi}^\dagger \ket{0;\Delta}\\
      & + \bra{1; \Delta} D_{\bxi}^\dagger \ket{1;\Delta}\bra{1; \Delta} D_{\bxi}^\dagger \ket{1;\Delta} \\  & + \mathcal{O}\Big(\text{exp}\Big\{ - \frac{1}{8\Delta^2}\big( (\lambda_{1}^{\perp})^2 + \norm{\bxi}^2 \big)\Big\}\Big) \bigg) \\
\end{align*}
which gives, via Eqs.~\eqref{eq:enveloped_00_asymptotics} and~\eqref{eq:enveloped_11_asymptotics}, that
\begin{equation}
    \bA(r) = 8 \pi r \, \text{exp}\Big\{ - \frac{1}{4}\Big(\frac{1}{T}+T\Big)r^2 \Big\} + \mathcal{O}\Big(\text{exp}\Big\{ - \frac{1}{8\Delta^2}\big( (\lambda_{1}^{\perp})^2 + r^2 \big)\Big\}\Big)\, . \\
\end{equation}
Similarly we have from Eqs.~(\ref{eq:Pi_GKP_approx},\ref{eq:Pi_GKP_difference}, \ref{eq:Pi_GKP_difference_order}), that
\begin{equation}
\label{eq:B_integrand_enveloped_matrix_elements}
    \begin{split}
         \bB(r) =   \int_{\norm{\bxi} = r} \diff \bxi \, \bigg(  &  \bra{0; \Delta} D_{\bxi}^\dagger \ket{0;\Delta}\bra{0; \Delta} D_{\bxi}^\dagger \ket{0;\Delta} \\
        & + \bra{1; \Delta} D_{\bxi}^\dagger \ket{0;\Delta}\bra{0; \Delta} D_{\bxi}^\dagger \ket{1;\Delta}\\
        & + \bra{0; \Delta} D_{\bxi}^\dagger \ket{1;\Delta}\bra{1; \Delta} D_{\bxi}^\dagger \ket{0;\Delta}\\
        & + \bra{1; \Delta} D_{\bxi}^\dagger \ket{1;\Delta}\bra{1; \Delta} D_{\bxi}^\dagger \ket{1;\Delta} \\ & + \mathcal{O}\Big(\text{exp}\Big\{ - \frac{1}{8\Delta^2}\big( (\lambda_{1}^{\perp})^2+ \norm{\bxi}^2 - \lambda_{1}^{\perp}\norm{\bxi})\big) \Big\}\Big) \Bigg) \\
    \end{split}
\end{equation}
and hence, via Eqs.~(\ref{eq:enveloped_00_asymptotics}, \ref{eq:enveloped_11_asymptotics}, \ref{eq:enveloped_10_asymptotics}) and \eqref{eq:enveloped_01_asymptotics}, that
\begin{equation}
    \bB(r) = 4 \pi r \, \text{exp}\Big\{ - \frac{1}{4}\Big(\frac{1}{T}+T\Big) r^2 \Big\} + \mathcal{O}\Big(\text{exp}\Big\{ - \frac{1}{8\Delta^2}\big( \lambda_{1}^{\perp}- \norm{\bxi}\big)^2 \Big\}\Big)\, .
\end{equation}
We hence obtain that for $\norm{\bxi} \leq \lambda_{1}^\perp/2 - \delta$ we have the following uniform bound on the quotient
\begin{equation}
    \frac{\bA(r)}{2\bB(r)} = 1 + \mathcal{O}\Big(\text{exp}\Big\{-\frac{\lambda_{1}^\perp \delta}{8\Delta^2} \Big\}\Big)
\end{equation}
i.e.\ these approximate GKP codes are $[[N=1, K=2, \lambda_{1}^\perp/2 - \delta, \epsilon]]$-QEDCs with $\epsilon \in \mathcal{O}(\text{exp}\{ -\lambda_{1}^\perp \delta/(8 \Delta^2) \})$ for all $\delta > 0$.

\section{Proof of Lemma 2}
\label{app:levenshtein_d_proof}

We now turn to the proof of Lemma~\ref{lem:levenshtein_supremum}. Let us first consider the case $N = \frac{1}{2}$. This case is not strictly relevant to the quantum setting, where $N$ denotes the number of modes and must be an integer. However, for $N=\frac{1}{2}$ the Fourier transform of the Levenshtein function is available explicitly, which simplifies the proof of this special case. It can then serve to illustrate some of the structure of the more general proof below. In particular, for $N= \frac{1}{2}$, we have:
\begin{equation}
    f(x) = \frac{\sin^2(\pi x)}{\pi^2 x^2 (1-x^2)}
\end{equation}
and
\begin{equation}
    \hat{f}(x) = \frac{1}{4 \sqrt{2 \pi^3}} \big(4 \pi + 2 \sin{x} - 2x\big) \chi_{2\pi}(x)\, .
\end{equation}
Let us now show that for sufficiently low $d$ the quotient $f(x/d)/\hat{f}(xd)$ achieves its supremum on $[0,d]$ at $x=0$. Equivalently,
\begin{equation}
    \frac{1}{\hat{f}(0)} \geq \frac{f(x/d)}{\hat{f}(x d)}
\end{equation}
which, for $d < \sqrt{2 \pi} \approx 2.5066$, can be restated as
\begin{equation}
      1 + \frac{1}{2 \pi} \big(\sin( dx ) - d x\big) \geq \frac{\sin^2(\frac{\pi x}{d})}{ (\frac{\pi x}{d})^2 \big(1-(\frac{x}{d})^2\big)} \, .
\end{equation}
Applying the product formula:
\begin{equation}
    \frac{\sin x}{x} = \prod_{n=1}^{\infty} \Bigg( 1 - \frac{x^2}{ \pi^2 n^2} \Bigg)\, ,
\end{equation}
 the above inequality is equivalent to 
 \begin{equation}
     1 + \frac{1}{2 \pi} \big(\sin( d x) - d x\big) \geq \Big(1-\frac{x^2}{d^2}\Big) \prod_{n=2}^{\infty} \Bigg( 1 - \frac{x^2}{ d^2 n^2} \Bigg)^2 \, .
 \end{equation}
 We can further employ the inequality $\sin x \geq x - x^3/6 $, valid for positive $x$, as well as the fact that for $x \in [0, d]$ all terms in the infinite product are non-negative and bounded by $1$ to obtain the more restrictive inequality
 \begin{equation}
     1 - \frac{d^3 x^3}{12 \pi} \geq 1 - \frac{x^2}{d^2} 
 \end{equation}
 which is satisfied on $[0, d]$ if it is satisfied at $x = d$, i.e.\@ when 
 \begin{equation}
     d \leq (12 \pi)^{1/6} \approx 1.8311 \, .
 \end{equation}
 Thus we have shown that for $0 < d \leq (12 \pi)^{1/6}$ the quotient $f(x/d)/\hat{f}(xd)$ achieves its supremum on $[0,d]$ at the origin.

For general $N \in \N$ we need to establish that for sufficiently small $d$ and $x \in [0, d]$ the following inequality holds 
\begin{equation}
    \frac{{\hat{f}(x d)}}{\hat{f}(0)} \geq f\Big(\frac{x}{d}\Big) = \frac{N!^2 4^N}{(1-(x/d)^2)} \frac{J_{N}(j_N x/d)^2}{(j_{N} x/d)^{2N}}\, .
\end{equation}
Employing the following product formula for Bessel functions:
\begin{equation}
\label{eq:bessel_product_formula}
    \frac{J_N(x)}{x^{N}} = \frac{1}{2^N N!} \prod_{k=1}^{\infty} \Bigg( 1 - \frac{x^2}{j_{N,k}^2} \Bigg) \, ,
\end{equation}
where $j_{N,k}$ denotes the $k$-th positive zero of $J_N$, we find that
\begin{equation}
    \label{eq:f_levenshtein_upper_bound}
    f\Big(\frac{x}{d}\Big) = \Big(1 - \frac{x^2}{d^2} \Big)\prod_{k=2}^{\infty} \Bigg( 1 - \frac{j_N^2 x^2}{j_{N,k}^2 d^2}\Bigg) \leq \Big(1 - \frac{x^2}{d^2}\Big)\, .
\end{equation}
The last inequality holds on $[0,d]$ as all terms in the product are positive and smaller than $1$. Let us now find a lower bound on ${\hat{f}(x d)}/\hat{f}(0)$. While $\hat{f}$ is not available explicitly for general $N \in \N$, recall that it can be expressed as a convolution in $\R^{2N}$
\begin{equation}
    \hat{f} = c_{N} g * \chi_{j_{N}}\, .
\end{equation}
Because $g$ has support within the radius $j_N$ ball we have $\hat{f}(0) = c_{N} \hat{g}(0) = c_{N}$ and hence
\begin{equation}
    \frac{\hat{f}(x/d)}{\hat{f}(0)} = \Big(g * \chi_{j_{N}}\Big)\Big(\frac{x}{d}\Big)\, .
\end{equation}
We can express the convolution as 
\begin{equation}
\label{eq:g_chi_convolution_crescent}
\Big(g * \chi_{j_{N}}\Big)(y) = 1 - \int_{B_{j_N}(0) \backslash B_{j_N}(y \hat{e}_1)} d\mathbf{r} \, g(r)\, ,
\end{equation}

\begin{figure}[t]
\centering
\includegraphics[width=.75 \textwidth]{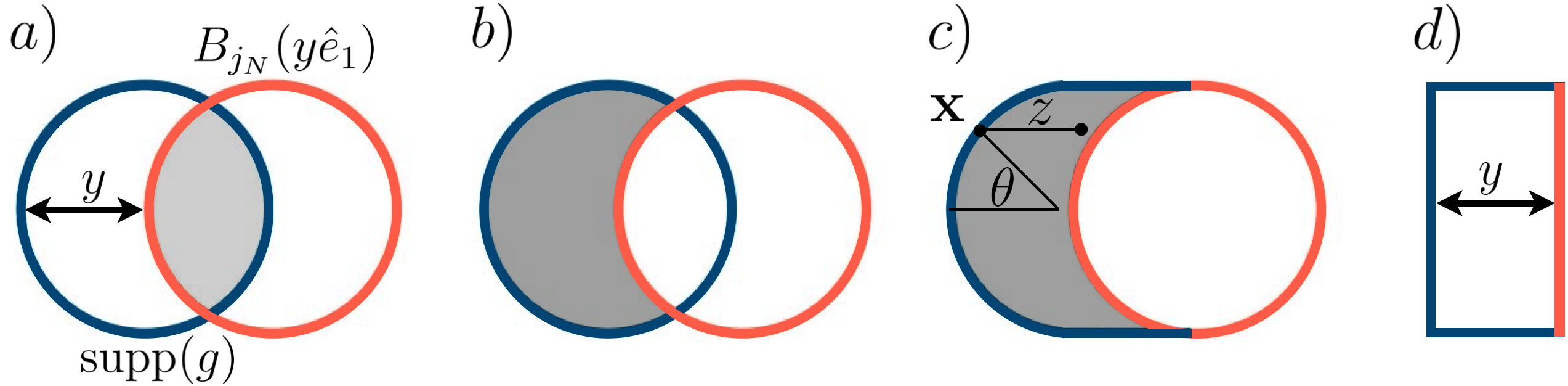}
\caption{Main steps in lower bounding the convolution product $g * \chi_{j_{N}}(y)$. a) the product is obtained given by integrating $g$ over the displaced Ball $B_{j_{N}}(y\hat{e}_1)$. As $g$ itself has support only within the radius $j_{N}$ Ball around the origin the total integration region is the intersection of two Balls (light gray). b) the integral of $g$ over $\R^{2N}$ is $1$, hence we can instead consider the integral over the complementary crescent shaped region (dark gray). c) to obtain an upper bound on the integral over the crescent shaped region we add some extra region on the top and bottom and then integrate an upper bound on $g$ over the enlarged region (dark gray). Some angles and points from the main text are also shown. d) the enlarged region can be deformed to the product of a $2N-1$-ball and the interval $[0, y]$ as it has constant width $y$ in direction $\hat{e}_{1}$.}
\label{fig:convolution_approximation}
\centering
\end{figure}
with $B_r(\mathbf{x}) \subset \R^{2N}$ the radius $r$ ball centered at $\mathbf{x}$. We will lower bound this expression by finding an upper bound to the integral. Key steps are summarized in Fig.~\ref{fig:convolution_approximation}. First, note that the radial function $g$ has the upper bound $g(r) \leq \frac{1}{2} C (r - j_N)^{2}$ for some constant $C$, to be determined later. Let now $\textbf{x} \in \partial B_{j_N}(0)$ be a point on the surface of the radius $j_{N}$ ball around the origin lying within the negative half space $x_1 < 0$. In particular let $x_1 = - j_N \cos \theta$. Then for $z \geq 0$ we have

\begin{equation}
\label{eq:g_quadratic_upper_bound}
\begin{split}
    g(\textbf{x} + z \hat{e}_1) & = g\Bigg( \sqrt{(z -j_N \cos \theta )^2 + j_N^2 \sin^2 \theta } \Bigg) \\ & = g\Bigg(\sqrt{(j_N - z \cos \theta)^2 + z^2 \sin^2 \theta}\Bigg) \\ & \leq g\big(j_N -z \cos \theta\big) \\
    & \leq \frac{1}{2} C z^2 \cos^2 \theta \, .
\end{split}
\end{equation}
where the first inequality holds since the radial component of $g$ is monotonically decaying.
Let us now denote by $\widetilde{B}(y) = \cup_{x\in [0,y]} B_{j_N}(x) \backslash B_{j_N}(y \hat{e}_1)$ the region swept out by translating the Ball at the origin by up to $y$ along the first unit vector $\hat{e}_1$ minus the Ball at $y \hat{e}_1$. Then, as $g$ is everywhere positive and $\widetilde{B}(y)$ includes the integration region in Eq.\eqref{eq:g_chi_convolution_crescent}, we have 
\begin{equation}
\Big(g * \chi_{j_{N}}\Big)(y) \geq 1 - \int_{\widetilde{B}(y)} d\mathbf{r} \, g(\mathbf{r}) \geq 1 - \frac{1}{6} C y^3 \int_{S^{(2N-1)}_{-}(j_N)} d\mathbf{r} \cos^2 \theta_{\mathbf{r}}
\end{equation}
where $S^{(2N-1)}_{-}(j_N)$ denotes the intersection of the radius $j_N$ sphere with the $x_1 < 0$ half-space and $\theta_{\mathbf{r}}$ the angle between $\mathbf{r}$ and $\hat{e}_1$. The second inequality follows from the upper bound~\eqref{eq:g_quadratic_upper_bound} and integration over the (constant) width of $\widetilde{B}(y)$. The integral can be simplified by partitioning the half-sphere into slices of lower dimensional spheres as follows
\begin{equation}
    I = \int_{S^{(2N-1)}_{-}(j_N)} d\mathbf{r} \cos^2 \theta_{\mathbf{r}} = j_N^{2N-1} \int_{0}^{\pi/2} d\theta \cos^2 \theta A^{(2N-2)}(\sin \theta)\, ,
\end{equation}
with $A^{(k-1)}(r) = \frac{2 \pi^{k/2}}{\Gamma(k/2)} r^{k-1}$ the surface area of a $k-1$-sphere of radius $r$. The remaining integral can be evaluated in terms of the Beta function $B(x,y) = \Gamma(x) \Gamma(y)/ \Gamma(x+y)$ as 
\begin{equation}
  I_k = \int_{0}^{\pi/2} d\theta\,  \cos^2 \theta \sin^{k} \theta  = \frac{1}{2}B\Big(\frac{3}{2}, \frac{k+1}{2}\Big)\, .
\end{equation}
We have thus established the following lower bound
\begin{equation}
    \hat{f}(y) \geq 1 - \frac{1}{12} C y^3 j_{N}^{2N-1} B\Big(\frac{3}{2},  \frac{2N-1}{2}\Big)
\end{equation}
and can, due to Eq.~\eqref{eq:f_levenshtein_upper_bound}, guarantee that the quotient $f(x/d)/\hat{f}(xd)$ achieves its supremum on $[0,d]$ at $x=0$ if
\begin{equation}
    1 - \frac{1}{12} C d^{3} y^3 j_{N}^{2N-1} B\Big(\frac{3}{2},  \frac{2N - 1}{2}\Big) \geq 1 - \frac{x^2}{d^2} 
\end{equation}
on $[0,d]$ which is achieved if it is achieved at $y=d$, i.e.\ when
\begin{equation}
\label{eq:d_upper_bound_C}
    d \leq \Bigg( \frac{12}{ C B\big(\frac{3}{2},  \frac{2N - 1}{2}\big) j_{N}^{2N-1} }  \Bigg)^{1/6} \, .
\end{equation}
Let us now derive a value for the constant $C$. For convenience we will introduce the following notation for the zonal spherical function $\varphi_{N}(x) = J_{N-1}(x)/x^{N-1}$. In order to establish the upper bound $g(r) \leq \frac{1}{2} C (r - j_N)^2$ note that a quadratic upper bound of the form $\varphi(x) \leq \varphi(j_{N}) + \frac{1}{2} D (x-j_{N})^2$ implies a value of  $C = c_{N} D/|\varphi(j_N)|$ when applied to Eq.~\eqref{eq:g_definition}. We can employ the identity $\partial_{x}\varphi_{N}(x) = -x \varphi_{N+1}(x)$ to obtain
\begin{equation}
\begin{split}
    \varphi_N(x) -  \varphi_N(j_N) & =  - \int_{x}^{j_N}  x' \varphi_{N+1}(x') dx' \\ &
    \leq \frac{1}{2^N N!}\int_{x}^{j_N}  x' \Big(1-\frac{x'^2}{j_N^2} \Big) dx' \\ &
    = \frac{1}{2^N N!} \Big( (x-j_N)^2 + \frac{(x-j_N)^3}{j_N} + \frac{(x-j_N)^4}{4 j_N^2}\Big) \, ,
\end{split}
\end{equation}
where the inequality follows by dropping terms from Eq.~\eqref{eq:bessel_product_formula}. This quartic upper bound can be converted into a quadratic upper bound by finding the smallest $a$ such that $a (x-j_N)^2$ upper bounds the quartic on $[0, j_N]$. We find
\begin{equation}
    \varphi_N(x) -  \varphi_N(j_N) \leq \frac{9}{ 2^{N+2} N!} (x-j_N)^2
\end{equation}
i.e. a value of $D = 9 / 2^{N+1} N!$. It follows that \begin{equation}
    C = \frac{9}{2 j_N^{N+1} |J_{N-1}(j_N)|} 
\end{equation}
and in turn, via Eq.~\eqref{eq:d_upper_bound_C}, that $f(x/d)/ \hat{f}(xd)$ achieves its maximum on $[0,d]$ at $x = 0$ for all 
\begin{equation}
    d \leq  \Bigg( \frac{16 N! |J_{N-1}(j_N)| }{3\sqrt{\pi} \Gamma\big(\frac{2N-1}{2} \big) j_{N}^{N-2}} \Bigg)^{1/6} \, .
\end{equation}

\end{document}